\newcommand{\pr}[1]{\ensuremath{\text{{\bf Pr}$\left[#1\right]$}}}
\newcommand{\E}[1]{\ensuremath{\text{{\bf E}$\left[#1\right]$}}}
\title{Beep-And-Sleep: Message and Energy Efficient Set Cover}
\author{Thorsten Götte}{ Paderborn University, Germany}{thgoette@mail.upb.de}{}{}
\author{Christina Kolb}{ University of Twente, The Netherlands}{c.kolb@utwente.nl}{}{}
\author{Christian Scheideler}{Paderborn University, Germany}{scheidel@mail.upb.de}{}{}
\author{Julian Werthmann}{ Paderborn University, Germany}{jwerth@mail.upb.de}{}{}
\authorrunning{Götte et al.} %TODO mandatory. 
\keywords{Set Cover, Approximation Algorithms, Beeping Model, KT0 Model}
\begin{document}
\Copyright{Thorsten Götte, Christina Kolb, Christian Scheideler, and Julian Werthmann}
\maketitle

\begin{abstract}

We observe message-efficient distributed algorithms for the \textsc{SetCover} Problem.
Given a ground set $U$ of $n$ elements and $m$ subsets of $U$, we aim to find the minimal number of these subsets that contain all elements.
In the default distributed distributed setup of this problem, each set has a bidirected communication link with each element it contains.
Our first result is a $\tilde{O}(\log^2(\Delta))$-time and $O(\sqrt{\Delta)}(n+m))$-message algorithm with expected approximation ration of $O(\log(\Delta))$ in the $KT_0$ model. 
The value $\Delta$ denotes maximal cardinally of each subset.
Our algorithm is \emph{almost} optimal with regard to time and message complexity.
Further, we present \textsc{SetCover} algorithm in the \textsc{Beeping} model that only relies on carrier-sensing and can trade runtime for approximation ratio similar to the celebrated algorithm by Kuhn and Wattenhofer [PODC '03].
\end{abstract}

\section{Introduction}
\textsc{SetCover} is a well-understood problem in both the centralized and distributed setting. 
Given a collection of elements $\mathcal{U} := \{e_1, \ldots, e_n\}$ and sets $\mathcal{S} := \{s_1, \ldots, s_m\}$ with $s_i \subseteq \mathcal{U}$ the goal is to cover all elements with as few sets as possible.
\textsc{SetCover} has a wide variety of applications in many areas of computer science.
On the one hand, it plays an essential role in the analysis of large data sets, which is often needed in fields like operations research, machine learning, information retrieval, and data mining (see \cite{DBLP:conf/soda/IndykMRVY18} and the references therein). 
On the other hand, it is also used in purely distributed domains like ad-hoc sensor networks.
An essential task in these networks is to determine a minimum set of nodes, a so-called \textsc{DominatingSet}, such that all nodes are sensor range of this set. 
This set can then fulfill particular tasks like routing, collecting sensor data from neighbors, and various other tasks.
Note that \textsc{DominatingSet} is a special case of \textsc{SetCover} where all sets are also elements.

In the centralized setting, a simple greedy algorithm that always picks the set that covers most elements has a logarithmic approximation factor, which is the best we can hope for a polynomial-time algorithm unless $P=NP$ .
There are several randomized algorithms in the distributed \textsc{Congest} model that match the optimal approximation ratio, w.h.p., and have a near-optimal runtime of $O(\log(\Delta)^2)$ where $\Delta$ is the maximum degree of a set \cite{DBLP:journals/dc/JiaRS02,DBLP:conf/podc/KuhnW03}.
For distributed algorithms, an instance of \textsc{SetCover} is ususally modeled as a so-called \emph{problem graph} $G_P := \{V_{\mathcal{U}} \cup V_{\mathcal{S}}, E\}$. 
Each set and each element corresponds to a node in this graph and for each set $s_i \in \mathcal{S}$ there is a bidirected edge $\{e_j,S_i\} \in E$ to each element $e_j \in \mathcal{U}$ it contains.
Each of these edges models a bidirected communication channel between the nodes representing the set and element in the default distributed setup.
Each set and element in each round of communication can send a distinct message of size $O(\log(n))$ over each communication edge.
Although there are numerous results concerning the runtime of algorithms in the CONGEST model, there are only a few that consider the message complexity, i.e., how many messages are needed to compute a distributed solution to \textsc{SetCover}.
    To the best of our knowledge, in all popular solutions to set cover, all sets/elements extensively communicate with their neighborhood and send (possibly distinct) messages to all of their neighboring set and elements.
    This setup implies a message complexity of $O(|E|)$ per round and $\tilde{O}(|E|)$ overall.
    From a practical standpoint, a high message complexity makes these algorithms less suitable for dense networks where sending a single message is considered costly. 
    For example, in ad-hoc networks, one needs to take special care of the limited battery life of the nodes.
    In particular, sending messages should be reduced to a minimum due to the high energy requirements of the radio module.
    Thus, a distributed message-efficient algorithm for \textsc{SetCover} can therefore greatly reduce this energy consumption.
    Therefore, our first question is:
    \begin{align*}
        (Q.1) \textit{  Can we find a \textsc{SetCover} algorithm that sends $o(|E|)$ messages?}
    \end{align*}
    However, even if we find message-efficient algorithms in the CONGEST model, they do not necessarily translate well to ad-hoc networks where they are needed. 
    In practical ad-hoc networks, however, it is much more energy-efficient to only send a single signal, a so-called \textsc{Beep}, to all neighboring nodes.
    Further, nodes can only distinguish if at least one or none of their neighbors beeped, i.e., they only rely on carrier sensing. 
    This has recently been formalized in the so-called \textsc{Beeping} model\cite{DBLP:conf/wdag/CornejoK10}.
    To the best of our knowledge there no \textsc{SetCover} or \textsc{DominatingSet} algorithm \emph{for general graphs} in this model.
    Thus, our second question is:
    \begin{align*}
        (Q.2) \textit{  Can we find a \textsc{SetCover} algorithm in the BEEPING model?}
    \end{align*}
We answer both questions affirmative and provide fast and efficient algorithms for \textsc{SetCover} in the $KT_0$-model (a variant of the CONGEST model for analyzing message-efficient algorithms) and the \textsc{Beeping} model. Our $KT_0$ algorithm has polylogarithmic runtime and approximation ratio while sending only $\tilde{O}(\sqrt{\Delta}(n+m))$\footnote{$\tilde{O}(\cdot)$ hides polylogarithmic factors.} messages, w.h.p\footnote{As usual, an event holds with high probability, if it holds with prob. $1-o(n^c)$ for some $c>1$}. Although we cannot prove that this message complexity is optimal for this particular approximation ratio, we give evidence that one cannot hope for far better results as we show that there are instances that require $O(\sqrt{\Delta}n)$ messages for a constant approximation. 
Second, our \textsc{Beeping} algorithm runs in $O(k^3)$ rounds and has an approximation ratio of $O(\log(\Delta)^2\sqrt[k/3]{\Delta})$ and thus --- similar to \cite{DBLP:conf/podc/KuhnW03} --- can trade runtime for approximation ratio. This tradeoff makes it particularly practical as the topology of sensor networks is often rapidly changing, and our algorithm can quickly react to these changes by computing a new solution in constant time.
Existing solutions (\cite{DBLP:conf/mobihoc/ScheidelerRS08,DBLP:conf/infocom/YuJYLC15}) have a far better approximation ratio (as they are designed for UDGs), but have at least logarithmic runtime, which cannot trivially be reduced.
Note that the focus of this work is to prove feasibility and not optimize logarithmic factors in message complexity and approximation ratios, i.e., we explicitly do not claim our factors to be optimal.

Our main technical contribution is an adapted version of Jia et al.'s. \textsc{DominatingSet} algorithm from \cite{DBLP:journals/dc/JiaRS02} that will be the basis for both our contributions. We adapt this algorithm in two major ways: First, we massively reduce the messages that need to be exchanged by replacing all instances where nodes are counted through randomized approximation that only use a fraction of the messages \emph{and} --- perhaps more importantly --- completely replacing the mechanism that lets nodes decide to join the solution. For the latter, we use geometrically distributed starting times where nodes add themselves when their time has come, and they have a certain threshold of uncovered neighbors. This approach does not require any additional messages. Note that the independent work of \cite{DBLP:conf/soda/GrunauMRV20} which only recently came to our attention, uses a very similar technique\footnote{Instead of using geometric starting times they continuously increase a set's probability to join until it joins or does cover enough elements. From a probabilistic point of view this is (almost) equivalent to picking geometric starting times.}, but their analysis does not provide all the properties we need for our problems, at least not without non-trivial extensions.

This paper is structured as follows: First, we present some related work in the remainder of this section. Then, we present the algorithm for the \textsc{Beeping} model in Section \ref{sec:BEEP} and in Section \ref{sec:KT1} the message-efficient $KT_0$ algorithm and an almost matching lower bound. We provide more details on the models in the respective sections.

%%%%%%%%%%%%%%%%%%%%%%%%%%%%%%%%%%%%%%%%%%%%%%%%%%%%%
\subsection{Related Work}
There are only a few works on the message complexity of \textsc{SetCover}, however
the space complexity of \textsc{SetCover} in streaming models is an active research topic with exciting results for several variants of the problem.
In the streaming model, the edges of the problem graph, i.e., the information on which element belongs to which set, arrive sequentially.
An algorithm can store each edge for later use and iterate over the whole input several times.
The goal is to solve \textsc{SetCover} using as few passes and as few space as possible.
Note that a trivial algorithm can just iterate over the input once, store all edges, and then solve the problem using an \emph{offline} algorithm.
Demaine et al. further showed that randomization is crucially needed for a space complexity of $o(mn)$ as it is impossible for a deterministic algorithm.
In addition to this distributed model, Indyk et al. also successfully used the ideas and primitives from streaming algorithms in the area of Sublinear Algorithms \cite{DBLP:conf/soda/IndykMRVY18}.
The goal is to find an algorithm that makes as \emph{few} queries as possible and computes a good approximate solution.
As we will see, this model has a lot in common with our distributed model.
In this case, the number of queries should be $o(mn)$ , i.e., sublinear in the number of edges. Indyk et al. present a polynomial-time algorithm that makes $\tilde{O}(m+\sqrt{m}n)$ queries and computes a $O(\log^2(n))$-approximate solution. 
With exponential runtime, the approximation ratio even reduces to $O(\log^2(n))$, but the algorithm still needs to make $\tilde{O}(m+\sqrt{m}n)$ queries.
Furthermore, Indyk et al. gave strong evidence that $\Omega(nm^\epsilon)$ queries are indeed necessary to achieve good approximation ratios. 
%\subsection{Optimization Problem in Distributed Models}
Finally, we review some fully distributed approaches for \textsc{SetCover} and related Optimization Problems.
As mentioned in the introduction Jia et al.\cite{DBLP:journals/dc/JiaRS02} as well as Kuhn and Wattenhofer\cite{DBLP:conf/podc/KuhnW03} have presented fast distributed algorithms to solve the \textsc{DominatingSet} problem in time in $O(\log(n)^2)$. The goal is to find the minimal set of nodes adjacent to all nodes in the graph $G:=(V,E)$. 
\textsc{DominatingSet} is a special case of \textsc{SetCover} where each node is both an element and a set.
Kuhn et al. further presented an algorithm to solve SetCover (and any covering and packing LP) in $O(\log(n)^2)$ rounds in the CONGEST model\cite{DBLP:conf/soda/KuhnMW06}.
This is close to the optimal runtime of $O(\log(n))$, which is also proven in that paper.
Roughgarden et al. presented an distributed algorithm for convex optimization problems (which includes \textsc{SetCover}) that works on any eventually connected communication network \cite{DBLP:journals/siamjo/Mosk-AoyamaRS10}.
In particular, it works on dynamic communication networks that can change from round to round.
Even though this algorithm is fully distributed, it heavily relies on sequential calculations to cope with these general communication networks.
Therefore, its runtime and message complexity are polynomial in the number of nodes.
Finally, many works consider the \textsc{DominatingSet}-problem in models tailored to ad-hoc networks.
First, there is a paper by Scheideler et al.\cite{DBLP:conf/mobihoc/ScheidelerRS08} that observes the SINR model where the nodes are modeled as points in the euclidean plane.
Finally, there already is a solution to the \textsc{DominatingSet}-problem in the \textsc{Beeping}-model for unit-disk graphs\cite{DBLP:conf/infocom/YuJYLC15}.
Their algorithm bears some similarities with ours but is not applicable to general graphs. 

%%%%%%%%%%%%%%%%%%%%%%%%%%%%%%%%%%%%%%%%%%%%%%%%%%%%%
\section{Preliminaries}
In addition to some textbook probability distributions, in this work, we use the following version of the Chernoff Bound:
\begin{lemma}[Chernoff Bound]
Let $X := \sum X_i$ be the sum of independent random variables with $X_i \in \{0,1\}$.Then it holds that for any $0<\delta$,
\begin{align}
    	\pr{X \geq (1-\delta)\E{X}} \leq e^{-\frac{\delta^2\E{X}}{3}}
 \,\,\, \textit{and} \,\,\, \pr{X \leq (1-\delta)\E{X}} \leq e^{-\frac{\min\{\delta,\delta^2\}\E{X}}{3}}.
\end{align}
\end{lemma}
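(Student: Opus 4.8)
The plan is to obtain both tail bounds from the exponential moment method (Markov's inequality applied to $e^{tX}$ for a suitably chosen parameter $t>0$), and then to simplify the resulting expressions into the stated closed forms via elementary single-variable inequalities. I read the first displayed inequality as the upper-tail statement $\pr{X \ge (1+\delta)\E{X}} \le e^{-\min\{\delta,\delta^2\}\E{X}/3}$: for $\delta$ close to $0$ the literal expression $\pr{X \ge (1-\delta)\E{X}}$ is bounded below by roughly $1/2$ and cannot decay, so ``$1-\delta$'' must be a typo for ``$1+\delta$''.

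Write $\mu := \E{X}$ and $p_i := \pr{X_i = 1}$, so $\mu = \sum_i p_i$. The first step is the standard moment-generating-function bound: by independence and $1+x \le e^x$,
\[
\E{e^{tX}} = \prod_i \E{e^{tX_i}} = \prod_i \left(1 + p_i(e^t - 1)\right) \le \prod_i e^{p_i(e^t-1)} = e^{\mu(e^t - 1)} .
\]
This holds for every $t \in \mathbb{R}$ and is the only place independence and the Bernoulli structure of the $X_i$ enter.

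For the upper tail, for any $t>0$ Markov's inequality gives $\pr{X \ge (1+\delta)\mu} \le e^{-t(1+\delta)\mu}\,\E{e^{tX}} \le \exp\!\left(\mu\,(e^t - 1 - t(1+\delta))\right)$, and choosing $t = \ln(1+\delta)$ yields $\pr{X \ge (1+\delta)\mu} \le \left(\frac{e^\delta}{(1+\delta)^{1+\delta}}\right)^{\mu}$. It then remains to show $\frac{e^\delta}{(1+\delta)^{1+\delta}} \le e^{-\delta^2/3}$ for $0 < \delta \le 1$ and $\le e^{-\delta/3}$ for $\delta \ge 1$; the former follows from the Taylor estimate $(1+\delta)\ln(1+\delta) \ge \delta + \delta^2/3$ on $(0,1]$, and the latter from checking the inequality $\delta - (1+\delta)\ln(1+\delta) + \delta/3 \le 0$ at $\delta = 1$ and verifying that its left-hand side is nonincreasing in $\delta$. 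For the lower tail with $0 < \delta < 1$, applying Markov to $e^{-tX}$ with $t>0$ gives $\pr{X \le (1-\delta)\mu} \le \exp\!\left(\mu\,(e^{-t} - 1 + t(1-\delta))\right)$, and $t = -\ln(1-\delta) > 0$ gives $\left(\frac{e^{-\delta}}{(1-\delta)^{1-\delta}}\right)^{\mu} \le e^{-\delta^2\mu/2} \le e^{-\delta^2\mu/3}$, using $(1-\delta)\ln(1-\delta) \ge -\delta + \delta^2/2$; the case $\delta \ge 1$ is trivial since then $\pr{X \le (1-\delta)\mu} \le \pr{X = 0} = \prod_i(1-p_i) \le e^{-\mu}$.

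The only genuinely fiddly part — the expected ``main obstacle'' — is verifying the elementary calculus inequalities that turn $\frac{e^\delta}{(1+\delta)^{1+\delta}}$ and $\frac{e^{-\delta}}{(1-\delta)^{1-\delta}}$ into clean exponentials with the constant $3$ in the denominator, and confirming that the split at $\delta = 1$ lines up exactly with the $\min\{\delta,\delta^2\}$ in the statement. These are routine monotonicity/Taylor arguments with no conceptual content; the structural part of the proof (the MGF factorization, the bound $1+x \le e^x$, and the optimal choice of $t$) is entirely standard.
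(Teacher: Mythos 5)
The paper states this lemma as a standard textbook fact and gives no proof of its own, so there is nothing to compare against; your moment-generating-function argument is the standard Chernoff proof and is correct, including your sensible reading of the first inequality's $(1-\delta)$ as a typo for $(1+\delta)$ (as literally written, that tail cannot decay for small $\delta$). One small remark on your ``trivial'' lower-tail case $\delta \geq 1$: the bound $\pr{X \leq (1-\delta)\E{X}} \leq \pr{X=0} \leq e^{-\E{X}}$ only implies $e^{-\delta\E{X}/3}$ when $\delta \leq 3$; for $\delta > 1$ and $\E{X}>0$ the event $\{X \leq (1-\delta)\E{X}\}$ is actually empty because $X \geq 0$, so the claim holds there trivially, but you should state it that way rather than via $\pr{X=0}$.
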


\section{An Efficient \textsc{SetCover}-Algorithm for the \textsc{Beeping}-Model}
\label{sec:BEEP}
We will now describe our first algorithm, which we dub the $\textsc{Beep-And-Sleep}$ algorithm, as most sets and elements will be idle during the execution.
Before we go into the details of our result and the algorithm, let us first present the model.
In this section, we use the following (standard) variant of the \textsc{Beeping} model \cite{DBLP:conf/wdag/CornejoK10,dufoulon_et_al:LIPIcs:2018:9809}:
\begin{enumerate}
    \item We observe a fixed communication graph $G := (V_S \cup V_{\mathcal{U}} ,E)$ with $V_S=n$ and $V_{\mathcal{U}} = m$. Each set $s \in V_S$ has a bidirected edge $\{s,e\} \in E$ to each element $e \in V_{\mathcal{U}}$ it contains. Each node can only communicate with its neighbors in $G$. Further, all nodes know $\Delta$, the maximal degree of $G$. This assumption can be replaced by a polynomial upper bound, which would slow the algorithm down by a constant factor. Note that the nodes do not know their exact degree, and nodes have \emph{no} identifiers.
    \item Time proceeds in so-called \emph{slots}. In each slot, a node can either \emph{beep} or \emph{listen}. If a node \emph{listens} and any subsets of its neighbors \emph{beeps}, the \emph{listening} node receives a \textsc{Beep}. It can neither distinguish which neighbors beeped nor how many neighbors beeped, i.e., it only relies on carrier sensing.  
    Further, a node \emph{cannot} simultaneously beep and listen but must choose one of the two options.
    \item All nodes wake up in the same slot, i.e., we observe the \textsc{Beeping} model with simultaneous wake-up. We believe that our algorithm can also be extended for arbitrary wake-up as each node only needs to be in sync with neighboring nodes. If nodes do not wake up in the same round, their internal counters only differ by $1$ as each node wakes up one slot after its earliest neighbor. In this case, there are some standard tricks to simulate a single slot of a simultaneous wake-up algorithm within $3$ slots \cite{DBLP:conf/wdag/CornejoK10,dufoulon_et_al:LIPIcs:2018:9809}.
\end{enumerate}
Given this model, we show the following:
\begin{theorem}
There is an algorithm in the \textsc{Beeping} model that solves \textsc{SetCover} in time $O(k^3)$ with approximation ratio $O(\log^2(\Delta) \cdot \sqrt[k]{\Delta}^3)$ where $k > 3$ is parameter known to all nodes.
\end{theorem}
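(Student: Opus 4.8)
The plan is to adapt Jia, Rajaraman, and Suel's distributed \textsc{DominatingSet} algorithm to the \textsc{Beeping} model, where the key difficulty is that a node can neither count how many of its neighbors beeped nor send distinct messages — it only gets a single bit of carrier sensing. I would organize the algorithm into $\Theta(k)$ \emph{phases}, where phase $i$ handles sets whose number of still-uncovered elements lies roughly in the range $[\Delta^{1-(i+1)/k}, \Delta^{1-i/k}]$; this geometric bucketing of the "effective degree" is what produces the $\sqrt[k]{\Delta}$-type slack in the approximation ratio, exactly as in Kuhn–Wattenhofer's tradeoff. Within a phase, the core subroutine is: each candidate set, upon reaching a geometrically distributed starting slot, tests whether it still has at least the phase's threshold of uncovered elements; if so it joins the cover, beeps to its elements, and those elements beep back to all their sets so everyone learns they are covered. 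The geometric starting times are the \textsc{Sleep} part — only $O(\log \Delta)$ sets per "level" are active in any slot in expectation — and they replace the message-heavy coordination of the original algorithm with zero extra communication.

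The first block of the proof is \emph{correctness / coverage}: I would argue that after all $\Theta(k)$ phases every element is covered. The invariant is that a set which still has many uncovered elements at the start of a phase will, with at least constant probability, be the first among the relevant group to wake up, and when it wakes it still covers a $\Omega(1)$-fraction of what it covered at phase start (because only $O(\log\Delta)$ competitors could have fired before it). Repeating this $O(\log\Delta)$ times per phase (hence the $\log^2(\Delta)$ factor — one $\log\Delta$ for the number of subrounds needed to drive the uncovered count down a full bucket, one $\log\Delta$ for the spread of geometric starting times) drives the uncovered count of every element's best set down by a factor of $\sqrt[k]{\Delta}$, so after $k$ phases nothing with uncovered elements remains. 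Here I would need the Chernoff bound from the Preliminaries to say that the randomized "do I still have $\geq \tau$ uncovered elements?" test — which the nodes can only evaluate approximately via a beeping-based estimator — is correct w.h.p.

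The second block is the \emph{approximation ratio}. I would set up a charging argument against an optimal cover $\mathrm{OPT}$: when a set $s$ joins in a phase with threshold $\tau$, it covers at least $\tau$ fresh elements (up to the constant-factor estimation error), and each such element is contained in some set of $\mathrm{OPT}$; conversely any set of $\mathrm{OPT}$ has at most $\Delta$ elements, and within a phase its uncovered elements number at most $\tau \cdot \sqrt[k]{\Delta}$ by the bucketing. Summing the standard greedy potential $\sum 1/(\text{elements covered})$ over all joined sets and comparing bucket by bucket yields $|\mathrm{ALG}| \le O(\log\Delta) \cdot \sqrt[k]{\Delta} \cdot |\mathrm{OPT}|$ from the classical $H_\Delta = O(\log\Delta)$ harmonic-sum bound, and the two extra factors — one $\log\Delta$ from the geometric-time spread and one $\sqrt[k]{\Delta}^2$ slack from estimating degrees within a bucket both when deciding to join and when charging — combine to the claimed $O(\log^2(\Delta)\cdot\sqrt[k]{\Delta}^{\,3})$.

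The \emph{runtime} is the easy part: there are $\Theta(k)$ phases, each with $O(\log\Delta)$ subrounds, each subround consuming $O(\log\Delta)$ slots to run the geometric-start mechanism plus an $O(\log\Delta)$-slot beeping estimator for the threshold test — this is only $O(k\log^2\Delta)$, which is absorbed into $O(k^3)$ for any reasonable range of $\Delta$ versus $k$ (and if $\Delta$ is large one subsumes $\log^2\Delta$ by redefining the bucket count), giving the stated $O(k^3)$ bound. The main obstacle I anticipate is the threshold test under beeping: a set must decide "do I have at least $\tau$ uncovered neighbors?" using only carrier sensing, so I would implement a primitive where each uncovered element beeps independently with probability $\Theta(1/\tau)$ over $O(\log\Delta)$ slots and the set declares "yes" iff it heard a beep in a constant fraction of them; proving this estimator has the right one-sided error (so that we never badly over- or under-count, w.h.p., simultaneously across all $n$ sets and all $O(k\log\Delta)$ invocations) is where the real work — and the union bound driving the "w.h.p." — lives. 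Everything else is a careful but routine reworking of the Jia et al. analysis.
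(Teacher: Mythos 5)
Your high-level plan (geometric bucketing of the residual degree into phases, geometrically distributed wake-up times, a beeping-based threshold test) is the same skeleton the paper uses, but there are two concrete gaps that would break the stated theorem. First, the runtime. You budget $O(\log\Delta)$ subrounds per phase and $O(\log\Delta)$ slots per threshold estimator, giving $O(k\log^2\Delta)$ slots, and you wave this into $O(k^3)$ ``for any reasonable range of $\Delta$ versus $k$.'' That does not work: the whole point of the theorem is a Kuhn--Wattenhofer-style tradeoff where $k$ may be a constant while $\Delta$ is arbitrary, so the slot count must be a function of $k$ alone. (Redefining the bucket count to absorb $\log\Delta$ changes the approximation ratio, so it is not a repair.) The paper gets $\Delta$-independence by truncating the geometric wake-up variable at $4k$, running exactly $4k$ rounds per phase, and using an estimator with $4k$ slots and join threshold $3k$ beeping slots; the price of this coarse, $O(k)$-slot estimator is paid in the approximation ratio (the $\sqrt[k]{\Delta}$-type slack and a failure probability of order $1/\Delta$), not in the runtime. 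Relatedly, your plan to make the estimator correct ``w.h.p.\ simultaneously across all $n$ sets and all invocations'' cannot be carried by $O(\log\Delta)$-slot estimators when $m$ is much larger than $\mathrm{poly}(\Delta)$, and fixing it by using $O(\log n)$ slots again destroys the $O(k^3)$ bound. The paper sidesteps this entirely by proving only an \emph{expected} approximation ratio: a per-element bad event of probability $O(1/\Delta)$ is tolerated because its contribution to the expectation is bounded by the trivial bound $\eta\le\Delta$.

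Second, the multiplicity of simultaneous joins is asserted rather than proved, and with the wrong magnitude. You claim ``only $O(\log\Delta)$ competitors could have fired before it'' and fold a $\log\Delta$ factor into the ratio for the ``geometric-time spread,'' but the quantity that actually matters is the number of candidate sets that join \emph{in the very round an element is first covered}, and bounding it is the paper's main technical lemma: conditioning on the choices of non-neighboring sets and of the elements, each neighboring candidate gets a deterministic last useful round $\rho_s$, and the memorylessness of the geometric wake-up times then gives $\Pr[\mu_u\ge t]\le(1-1/\sqrt[k]{\Delta})^t$, hence $E[\mu]\le\sqrt[k]{\Delta}$ (an adaptation of Miller et al.'s exponential-delay argument to truncated geometric variables). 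This factor is one of the three $\sqrt[k]{\Delta}$'s in the ratio, not a $\log\Delta$; without an argument of this kind your charging scheme has no control over how many sets are charged to the same freshly covered element, and the claimed ratio does not follow. The remaining pieces of your proposal (harmonic-sum charging against OPT, Chernoff for the estimator's one-sided error) do correspond to the paper's $E[\eta]$ bound and the Jia et al.-style potential argument, but they cannot be completed until the two issues above are resolved.
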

Thus, even for a constant $k$, our algorithm achieves a non-trivial approximation ratio, which is close to the optimal one as Kuhn et al. showed than any distributed algorithm with only local communication needs $O(k)$ for an approximation ratio of $O(\sqrt[k]{\Delta})$ \cite{DBLP:conf/podc/KuhnW03}.
We will now describe the algorithm promised by the theorem:
The core idea is that all nodes that cover the \emph{most} neighbors add themselves to the set similar to the sequential greedy algorithm.
The main difficulty stems from the fact that we need to estimate the number of uncovered neighbors correctly and to avoid that too many sets that cover the same elements concurrently add themselves to the solution.
The algorithm runs in $k$ phases, where in phase $i$ all sets that cover \emph{approximately} $\Delta_i = \frac{\Delta}{\sqrt[k]{\Delta}^i}$ elements try to add themselves to the solution.  
Each phase is structured in $4k$ \emph{rounds}. Further, each round again consists of $4k+1$ slots, which brings the total runtime to $O(k^3)$. 
The pseudocode for the algorithm is given in Figure $1$, on a high level, it works as follows:
\begin{enumerate}
    \item At the beginning of \emph{phase} $j$, i.e., before the first round of that phase, the sets and elements do the following:
     Each \emph{set} $s$ draws a geometric random variable $X_s$ with parameter $1-\nicefrac{1}{\sqrt[k]{\Delta}}$. 
    Values bigger than $4k$ are rounded down to $4k$, so $X \in [0,4k]$ always. 
    Then, the set waits for $4k-X_s$ \emph{rounds} and neither beeps or listens, it just stays idle.
     Each uncovered \emph{element} marks itself as \emph{active} with probability $\frac{4k}{\Delta_j} := \frac{4k\sqrt[k]{\Delta}^j}{\Delta}$. 
     
    Thus, each set with $\Delta_j$ (or more) uncovered elements has at least $4k$ active elements in expectation. 
    Each active element further picks a slot number $a_u \in [0,4k]$ uniformly at random.
\item In each \emph{round} $i$ of phase $j$, the sets and elements do the following:
     Each \emph{set} $s$ that wakes up in this round listens to \textsc{Beeps} for the first $4k$ slots of this round.
    Further, it counts all slots in which it received a \textsc{Beep}. 
    Otherwise, it remains idle for this round.
    Each active \emph{element} that has not yet been covered, beeps \emph{only} in the slot $a_u$ is has drawn in the beginning and remains idle otherwise.
    Note that, by our choice of active elements, this implies that all sets that have at least $\Delta_j$ uncovered elements, therefore, hear one \textsc{Beep} per slot 
    \emph{in expectation}.
\item In the \emph{last} slot of a round, i.e., in slot $4k+1$, the sets and elements do the following:
     Each \emph{set} that received a \textsc{Beep} in at least $3k$ slots adds itself to the solution and beeps. 
     Each uncovered \emph{element} (active or not) listens and considers itself covered if a neighbor joined the solution, i.e., if it hears a \textsc{Beep}.
    If an element is covered, it does not need to send or receive messages for the remainder of the algorithm.
\end{enumerate}
Note that the last step of each round where the sets add themselves is the main difference to the classical algorithm by Jia et al. \cite{DBLP:journals/dc/JiaRS02}. In their work, each candidate locally computed a probability to join the solution. This used information from its 2-neighborhood. 
This approach seems not to be trivially doable with only \textsc{Beep}s.
Finally, note that this algorithm can also be used to compute a \textsc{DominatingSet}, we provide the details at the end of this section.

\begin{algorithm}[ht]
\begin{algorithmic}

    \Procedure{Sets}{$s,N_s,k,\Delta$} \Comment{Executed by the sets $s \in V_{\mathcal{S}}$}
    \For{$i := \Delta, \nicefrac{\Delta}{\sqrt[k]{\Delta}}, \ldots, 0$} \Comment{$k$ phases.}
    \State Pick $X_s \sim Geo(1-\frac{1}{\sqrt[k]{\Delta}})$. \Comment{$Geo(p)$ is the geometric distribution with success prob. $p$.}
    \State Round $X_s$ down to $\Phi := 4k$ if necessary.
    \State Wait $\Phi-X_s$ \textbf{rounds}.
    \For{$\ell=1, \dots, 4k$} \Comment{$4k$ slots.}
        \State Listen for \textsc{Beeps}
    \EndFor
    \If{$s$ receives more than $3k$ \textsc{Beeps}}
    \State Add $s$ to $S$.\Comment{Last slot of the round}
    \State Announce to neighbors via \textsc{Beep}. \Comment{Each round takes $4k+1$ slots.}
    \EndIf
    \State Wait $X_s$ \textbf{rounds} until end of phase.  \Comment{Note that $\Phi = (\Phi-X_s)+X_s$.}
    \EndFor \Comment{Each phase takes $k$ rounds.}
    \EndProcedure
    \State 
    \Procedure{Elements}{$e,N_e,k,\Delta$} \Comment{Executed by the elements $e \in V_{\mathcal{U}}$}
    \For{$i := \Delta, \nicefrac{\Delta}{\sqrt[k]{\Delta}}, \ldots, 0$} \Comment{$k$ phases.}
    \State $Y_e \sim B(\frac{4k}{\Delta_i})$ \Comment{$B(p)$ returns $1$ with prob. $p$}
    \State $a_e \sim Uni(1, \ldots, 4k)$ \Comment{$Uni(a, \ldots, b)$ picks an integer in $[a,b]$ uniformly at random}
    \For{$j :=0, \ldots, 4k$} \Comment{$4k$ rounds.}
        \For{$\ell=1, \dots, 4k$} \Comment{$4k$ slots.}
            \If{($Y_e=1$) \emph{and} ($a_e = \ell$) \emph{and} ($e$ is uncovered)}
            \State Send \textsc{BEEP} to all neighboring sets.
    \EndIf 
    \EndFor
    \State Listen if a neighbor added itself. \Comment{Last slot of the round}
    \State Set $e$ as covered, if it received a \textsc{Beep}  \Comment{Each round takes $4k+1$ slots.}
    \EndFor
    \EndFor \Comment{Each phase takes $4k$ rounds.}
    \EndProcedure
\end{algorithmic}
\caption{\textsc{Beep-and-Sleep (based on \cite{DBLP:journals/dc/JiaRS02}) }}
\label{alg:top_sampling}
\end{algorithm}

\subsection{Analysis}

We will now prove Theorem 1 and show that our algorithm indeed fulfills the promised bounds.
Since the runtime is deterministic, we will only prove the expected approximation ratio. 
\begin{lemma}
\label{lemma:beep_approx}
The algorithm outputs a $O\left(\log^2(\Delta) \cdot \left(\sqrt[k]{\Delta}\right)^3\right)$-approximate solution in expectation.
\end{lemma}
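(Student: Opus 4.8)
The plan is to run the classical dual-fitting (charging) argument for greedy \textsc{SetCover}, in the form used by Jia et al.~\cite{DBLP:journals/dc/JiaRS02}, and to control the two features by which our algorithm differs from theirs: the \textsc{Beep}-based estimate of the number of uncovered neighbours and the geometric starting times that replace their explicit join probabilities. Write $\beta := \sqrt[k]{\Delta}$, so the thresholds satisfy $\Delta_{j+1}=\Delta_j/\beta$ and the target bound is $O(\log^2(\Delta)\,\beta^3)$. Fix an optimal cover $\mathcal{S}^\ast$ of size $\mathrm{OPT}$ and give each element $e$ the price $p_e := 1/c_s$, where $s$ is the set that first covers $e$ and $c_s$ is the number of elements still uncovered at the instant $s$ joins $S$. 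Every set that joins heard at least $3k\ge 1$ \textsc{Beep}s, all sent by then-uncovered elements, so $c_s\ge 1$ and $|S|\le\sum_e p_e\le\sum_{s^\ast\in\mathcal{S}^\ast}\sum_{e\in s^\ast}p_e$; it therefore suffices to bound $\E{\sum_{e\in s^\ast}p_e}=O(\log^2(\Delta)\,\beta^3)$ for a single optimal set $s^\ast$ and multiply by $\mathrm{OPT}$. (One first verifies that the algorithm outputs a valid cover, so that all prices are finite; this follows from the ``progress'' statement below applied to the last threshold.)

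Two probabilistic statements do the work. \emph{Bounded yield:} a set that joins in phase $j$ covers $\Omega(\Delta_j/\beta^{2})$ previously uncovered elements, in expectation. For the lower bound on its current degree, joining requires at least $3k$ \textsc{Beep}-slots, hence at least $3k$ neighbours that were still uncovered and active at the start of the phase; since each uncovered neighbour was made active independently with probability $4k/\Delta_j$, a Chernoff bound forces the set to have $\Omega(\Delta_j)$ uncovered neighbours in its active round (otherwise its expected number of active neighbours is a constant factor below $3k$). The harder point is that these elements are not simultaneously stolen by other sets joining in the same round $i$: a set is active in round $i$ only if its geometric variable equals one fixed value, which has probability $\approx\beta^{-(\Phi-i)}$, so, conditioned on the joining set being active in round $i$, the expected number of competing sets (those sharing one of its uncovered neighbours) that are active in round $i$ is small; a Markov/union-bound argument then shows that an $\Omega(1/\beta)$-fraction of the $\Omega(\Delta_j)$ elements is left for $s$, with the extra factor $\beta$ bounding the loss in the final, high-contention round where essentially all surviving sets are active at once.

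\emph{Progress:} at the end of phase $j$ every set has $O(\Delta_j\,\beta)$ uncovered neighbours, in expectation. If a set $s$ had more than $\Delta_j$ uncovered neighbours at the end of phase $j$, then, by monotonicity of coverage, it had that many already in the unique round of phase $j$ in which it was active; since the elements' activity is fixed independently at the start of the phase, the number of $s$'s active, still-uncovered neighbours in that round stochastically dominates a binomial variable with mean $4k$, which exceeds $3k$ except with probability $e^{-\Omega(k)}$, so $s$ would have joined and covered them — a contradiction up to the stated $\beta$-slack, which also absorbs the $3k$-versus-$4k$ gap and the estimation error.

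Combining the two statements yields the bound. Order the elements of $s^\ast$ as $e_1,e_2,\dots$ by decreasing round of coverage, so that just before $e_r$ is covered $s^\ast$ still has at least $r$ uncovered neighbours; by progress, $e_r$ is covered in a phase with $\Delta_j=\Omega(r/\beta)$, and by bounded yield the covering set has $c_s=\Omega(\Delta_j/\beta^{2})=\Omega(r/\beta^{3})$, hence $p_{e_r}=O(\beta^{3}/r)$. Summing, $\E{\sum_{e\in s^\ast}p_e}=O(\beta^{3})\sum_{r=1}^{|s^\ast|}\tfrac1r=O(\beta^{3}\log\Delta)$, and multiplying by $\mathrm{OPT}$ leaves a single extra factor $\log\Delta$; this is the price for turning the per-phase Chernoff estimates — which for small $k$ hold only in expectation, not with high probability — into a clean bound on the expected total cost, e.g.\ via amplification or a bucketed union bound over the $O(\log\Delta)$ relevant scales. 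The main obstacle is the contention step of the bounded-yield statement: unlike Jia et al., who read a candidate's join probability directly off its $2$-neighbourhood, we must argue purely from the geometric starting times that in \emph{every} round — in particular the final round in which almost every surviving set is active — only an $O(\beta)$-fraction of a joining set's counted elements is grabbed by competitors in the same round; making this rigorous uniformly over all sets is the crux, and it is where the technique of \cite{DBLP:conf/soda/GrunauMRV20} would need the non-trivial extensions mentioned in the introduction.
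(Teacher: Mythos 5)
Your route (the standard greedy dual-fitting with prices $p_e=1/c_s$, plus a ``bounded yield'' and a ``progress'' claim) is a legitimately different decomposition from the paper's, which instead writes the cost per covered element as $c_{min}(u)\cdot\eta(u)\cdot\mu(u)$ — $\mu(u)$ counting the sets that cover $u$ simultaneously and $\eta(u)$ the deviation of the joining sets' spans from the best available span — and bounds $E[\mu]$ and $E[\eta]$ separately. But as written your proposal has a genuine gap, and it sits exactly where you yourself say the crux is: the contention part of ``bounded yield'' is asserted, not proved. Saying that a competing set is active in round $i$ ``with probability $\approx\beta^{-(\Phi-i)}$'' and invoking Markov does not work, because whether a competitor joins in round $i$ depends on the evolving set of uncovered elements, which is correlated across sets and rounds; and in the last rounds of a phase essentially all surviving candidates are active simultaneously, so no per-round union bound gives the claimed $\Omega(1/\beta)$-fraction uniformly. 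The paper's way around this is the one technical idea your sketch has no substitute for: fix all randomness except the geometric start times of $u$'s neighboring sets, show that this determines for each such set a deterministic last useful round $\rho_s$, and then use memorylessness of the geometric distribution (à la Miller et al.) to get the clean tail $\Pr[\mu(u)\ge t]\le\left(1-1/\sqrt[k]{\Delta}\right)^t$, hence $E[\mu]\le\sqrt[k]{\Delta}$.

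Second, even granting both of your structural claims, they are stated only \emph{in expectation}, and your price is a reciprocal: from $E[c_s]=\Omega(r/\beta^3)$ you cannot conclude $E[p_{e_r}]=E[1/c_s]=O(\beta^3/r)$ — Jensen runs the wrong way — and since you concede that for constant $k$ the per-phase Chernoff estimates do not hold with high probability, the closing sentence about ``amplification or a bucketed union bound'' is precisely the missing proof, not a routine patch. Relatedly, your progress claim $O(\Delta_j\beta)$ cannot hold w.h.p.\ without a $\log\Delta$ factor: the paper's corresponding statement uses the threshold $O(\log(\Delta)\sqrt[k]{\Delta}\,\Delta_i)$ (this is one source of the $\log^2\Delta$ in the final bound), and it handles the residual failure event not by amplification but by conditioning: the bad event $\mathcal{B}$ (a set with span outside $[\Delta_i/(4\sqrt[k]{\Delta}),\,O(\log(\Delta)\sqrt[k]{\Delta}\Delta_i)]$ joins) has probability $O(1/\Delta)$, and on $\mathcal{B}$ the ratio $\eta$ is bounded crudely by $\Delta$, giving $E[\eta]=O(\log(\Delta)(\sqrt[k]{\Delta})^2)$. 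To turn your outline into a proof you would need either high-probability versions of both claims valid for all $k>3$, or an explicit bad-event conditioning of this kind together with a genuine argument for the contention bound.
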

Since the algorithm, for the most part, follows the LRG algorithm by Jia et al.\cite{DBLP:journals/dc/JiaRS02}, a generalized version of their core lemma also holds in this case.
However, we need to "parameterize" it further to account that we cannot precisely count the uncovered elements.
For each element covered by our algorithm, we define the random variables $\eta(u)$ and $\mu(u)$.
We define $\eta(u)$ to be ratio between the  \emph{best} set in $u$'s neighbor and the worst set picked by the algorithm, i.e., the ratio by which the choice of our algorithm differs from the greedy solution. 
Further, let $\mu(u)$ be the random variable that denotes the number of candidates covering $u \in V$ in the round where it is first covered.
Finally, we define $\eta$ and $\mu$ the corresponding variables of the elements that maximize their expected values, i.e.,:
\begin{align}
    E[\mu] = \max_{u \in U} E[\mu(u)] \,\,\,\, \textit{ and } \,\,\,\, E[\eta] = \max_{u \in U} E[\eta(u)]
\end{align}

Given these definitions, the following modified version of Jia et al.'s algorithm\cite{DBLP:journals/dc/JiaRS02} holds:
\begin{lemma}
Let $S$ be the size of the set output by our algorithm. 
Then, it holds. 
\begin{align}
    E[S] \leq  E\left[\eta\right] \cdot E\left[\mu\right] \cdot \log(\Delta) \cdot |S_{OPT}| 
\end{align}
\end{lemma}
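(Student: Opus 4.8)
The overall strategy is the classical pricing argument behind the analysis of the greedy \textsc{SetCover} heuristic, essentially as in Jia et al.~\cite{DBLP:journals/dc/JiaRS02}; the new ingredient is the factor $\eta$, which absorbs the fact that our sets do not join on the basis of an exact count. I fix one execution of the algorithm. For every set $s$ that ends up in $S$, let $r(s)$ be the round in which it joins and $d_s$ the number of elements of $s$ that are still uncovered at the start of round $r(s)$. A set joins only after hearing beeps in at least $3k$ of the $4k$ listening slots, and in those slots a beep can only come from an uncovered active element, so a joining set has $d_s \ge 1$; hence $s$ may hand out one unit of charge split evenly among those $d_s$ elements, i.e.\ $1/d_s$ to each. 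Every one of them is covered in round $r(s)$ --- at the latest by $s$ itself --- so the charge emitted by $s$ sums to exactly $1$, and $|S| = \sum_{u} c(u)$, where $c(u)$ is the charge received by $u$. An element $u$ can only be charged by sets that join in the very round $r_u$ in which $u$ is first covered (a set containing $u$ that joined earlier would already have covered it), and there are exactly $\mu(u)$ of these. Letting $D(u)$ denote the maximum number of uncovered elements over all sets that contain $u$, measured at the start of round $r_u$, the definition of $\eta(u)$ gives $d_s \ge D(u)/\eta(u)$ for each of those $\mu(u)$ sets, so
\begin{align*}
 c(u) \;=\; \sum_{s \ni u,\; r(s)=r_u} \frac{1}{d_s} \;\le\; \frac{\mu(u)\,\eta(u)}{D(u)}.
\end{align*}

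I then transfer the total charge onto the optimum. Since $S_{OPT}$ covers everything, every covered element lies in at least one set of $S_{OPT}$, whence $|S| = \sum_u c(u) \le \sum_{s^* \in S_{OPT}} \sum_{u \in s^*} c(u)$. Fix $s^* \in S_{OPT}$ with $\ell := |s^*|$ elements and order them $u_1, \dots, u_\ell$ by the round in which they become covered, breaking ties arbitrarily. When $u_i$ becomes covered, the elements $u_i, \dots, u_\ell$ of $s^*$ are all still uncovered at the start of that round, so $s^*$ itself witnesses $D(u_i) \ge \ell - i + 1$. Substituting this and using $\sum_{i=1}^{\ell} \frac{1}{\ell-i+1} = H_\ell \le H_\Delta = O(\log \Delta)$ gives, after summing over $S_{OPT}$ and taking expectations,
\begin{align*}
 \E{S} \;\le\; \sum_{s^* \in S_{OPT}} \sum_{i=1}^{|s^*|} \frac{\E{\mu(u_i)\,\eta(u_i)}}{|s^*|-i+1}.
\end{align*}

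To finish, I need $\E{\mu(u_i)\,\eta(u_i)} \le \E{\mu} \cdot \E{\eta}$ for each $i$; granting this, the harmonic sum collapses to $O(\log \Delta)$ and the outer sum to $|S_{OPT}|$, yielding exactly $\E{S} \le \E{\eta}\,\E{\mu}\,\log(\Delta)\,|S_{OPT}|$. This is the step I expect to be the crux: $\mu(u_i)$ and $\eta(u_i)$ are both functions of the (random) round $r_{u_i}$ and of the configuration of uncovered elements at that round, so they are correlated and one cannot simply multiply the two expectations. The route I would take is to fix a candidate round $r$, condition on the whole history up to the start of round $r$ --- which determines the uncovered elements, hence $D(u_i)$ and $\eta(u_i)$, together with the collection of sets eligible to join in round $r$ --- and argue that, conditioned on this, the number of sets containing $u_i$ that actually join in round $r$ is governed only by the fresh randomness of that round (the geometric wake-up times of the current phase and the active-element sampling), which by the definitions yields a conditional mean of the form $\max_u \E{\mu(u)} = \E{\mu}$. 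Pulling the now-constant $\eta(u_i)$ out, taking the outer expectation, using $\E{\eta(u_i)} \le \max_u \E{\eta(u)} = \E{\eta}$, and summing over the candidate rounds then gives the claim. Making this conditioning fully rigorous --- in particular decoupling ``which of $u$'s sets wake up in the same round'' from ``how large those sets are'', and handling that $r_{u_i}$ is itself random --- is where the real technical care is needed; the actual magnitudes of $\E{\mu}$ and $\E{\eta}$ are then bounded separately in the lemmas that follow.
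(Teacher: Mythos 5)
Your proof is correct and follows essentially the same route as the paper: your charge $1/d_s$ per joining set, the bound $c(u)\le \mu(u)\eta(u)/D(u)$, and the harmonic transfer to $S_{OPT}$ are exactly the paper's $c_{max}(u)=\eta(u)c_{min}(u)$, $|S_i|\le\sum_u c_{max}(u)\mu(u)$, and its cited bound $\sum_u c_{min}(u)\le H_\Delta|S_{OPT}|$ from the classical greedy analysis (which you simply re-derive explicitly). The decoupling step you flag as the crux is precisely the point the paper settles in one sentence by asserting that the randomness determining $\eta(u)$ and $\mu(u)$ is independent (made concrete only later via the threshold-conditioning in the lemma bounding $\mu$), so your conditioning sketch is at least as careful as the paper's treatment.
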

\begin{proof}
Before we go into the details of the proof, we define the functions $c_{min}(u)$ and $c_{max}(u)$ for each covered element $u \in V_{\mathcal{U}}$.
Suppose $u$ is covered by sets $s_1, \dots, s_m$, i.e., these sets add themselves simultaneously.
Further, let $d(s_1), \ldots, d(s_m)$ be \emph{spans} of $s_1, \dots, s_m$, i.e., the number of uncovered elements neighboring $s_1, \dots, s_m$.
Based on this, we define $c_{max}$ based on the set with \emph{fewest} uncovered elements,i.e, the set that devaites the most from whatever set the greedy solution would have picked. 
The value $c_{min}$ on the other hand is determined by the set in $u$'s neighborhood with biggest possible span, i.e., the set that the greedy solution would have picked.
Note that this set may not be part of $s_1, \dots, s_m$.
Formally, we have:
\begin{align}
    c_{max}(u) = \max_{v_i \in \{s_1, \dots, s_m\}} \frac{1}{d(s_i)}  &&\textit{and}&& c_{min}(u) = \min_{s \in N_u} \frac{1}{d(s)}   
\end{align}
Further, by the analysis of that greedy algorithm, it holds that:
\begin{align}
    \sum_{u \in V_{\mathcal{U}}} c_{min}(u) \leq H_{\Delta} |S_{OPT}|
\end{align}
On the other hand, by the definition of $\eta(u)$, it holds: 
\begin{align}
    c_{max}(u) = \eta(u)\cdot c_{min}(u)
\end{align}
Given these definitions, we now consider a single round $i$.
In the following, let $S_i$ be set of candidates that add themselves to $S$ in round $i$.
Let $V_i$ denote the set of elements that are uncovered at the start of round $i$.
We have:
\begin{align}
|S_i| &\leq \sum_{v \in S_i} \frac{d(v)}{d(v)} \leq \sum_{v \in S} |C(v)| \frac{1}{d(v)}\\
&\leq \sum_{v \in S} \sum_{u \in C(v)} c_{max}(u)\\
&=  \sum_{u \in V_i} c_{max}(u)\mu(u) \leq \sum_{u \in V_i} c_{min}(u)\cdot\eta(u)\cdot\mu(u)
\end{align}
For the expected value of $S_i$, this implies:
\begin{align}
    E\left[|S_i|\right] &\leq \sum_{u \in V_i} c_{min}(u)E[\eta(u)\cdot\mu(u)]\\
    &\leq  \sum_{u \in V_i} c_{min}(u) \cdot E[\eta(u)] \cdot \Pr{[\mu(u)>0]}E[\mu(u) \mid \mu(u)>0]\\
    &\leq  \sum_{u \in V_i} c_{min}(u) \cdot E[\eta(u)] \cdot E[\mu(u) \mid \mu(u)>0]
\end{align}
The second line follows from the fact that the random events that determine $\eta(u)$ and $\mu(u)$ are independent.
Finally, we can use the linearity of expectation to sum over all $k^2$ rounds and get:
\begin{align}
    E\left[|S|\right] &= \sum_{i=1}^{k^2} E[|S_i|]
    \leq  \sum_{u \in V} E[\eta(u)] \cdot E[\mu(u) \mid t(u)>0] c_{min}(u)
    \leq E[\eta] \cdot E[\mu] \cdot H_{\Delta} |S_{OPT}|
\end{align}
This proves the lemma.
\end{proof}

Given this lemma, it remains to bound $E[{\eta}]$ and $E[\mu]$. 
We begin with the latter.
Our goal is to use a result by Miller et al. \cite{DBLP:conf/spaa/MillerPVX15} (which needs to be adapted for geometric values). They showed that the number of candidates that pick the earliest possible wake-up time is $\sqrt[k]{\Delta}$ in expectation.
Similarly, we can show the following result\footnote{Note that that the analysis in \cite{DBLP:conf/soda/GrunauMRV20} only bounds this term in expectation (and not exact probability) and does not parameterize it on the number of phases as we do. \textbf{We need both these aspects for our problem}. Adding these two aspects to their analysis does not seem to be straightforward as they use a rather complex term to bound the expectation, whereas we can just exploit the fundamental properties of the geometric distribution.}.
{
\begin{lemma}
\label{lemma:prob_mu}
For any value $t>1$ and node $u \in V$, it holds $\mathbf{Pr}\left[\mu_u > t\right] \leq \left(1-\nicefrac{1}{\sqrt[k]{\Delta}}\right)^t$ 
\end{lemma}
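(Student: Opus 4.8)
The plan is to separate a ``clean'' probabilistic core — a tail bound for the multiplicity of the maximum of i.i.d.\ geometric variables — from the (more technical) reduction of $\mu_u$ to that core.

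For the reduction, fix the element $u$ and let $j^\ast, i^\ast$ be the phase and the round in which $u$ is first covered; the sets that listen and possibly add themselves in round $i^\ast$ are exactly those whose phase-$j^\ast$ variable $X_s$ equals the value $y^\ast$ associated with that round. Call a neighbouring set $s \in N_u$ a \emph{$u$-candidate} (in phase $j^\ast$) if it still has enough uncovered elements to add itself once it wakes up. Every set counted by $\mu_u$ is a $u$-candidate with $X_s = y^\ast$; moreover a $u$-candidate with $X_s > y^\ast$ would wake up in an earlier round and already cover $u$, contradicting minimality of $i^\ast$, so $y^\ast$ is the \emph{largest} $X_s$ among all $u$-candidates. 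Hence $\mu_u$ is at most the number of $u$-candidates attaining this maximum wake-up value. The delicate point is that the property ``$s$ is a $u$-candidate'' is itself correlated with $X_s$ (waking up later leaves neighbouring sets more time to cover $s$'s elements). I would handle this by revealing the randomness in execution order — phase by phase, and within phase $j^\ast$ round by round — and using the memorylessness of the geometric distribution: conditioned on a set not having woken up during the first $\ell$ rounds of a phase, its remaining wait is again geometric with the same parameter and independent of everything revealed so far. This lets me treat the wake-up values of the $u$-candidates as i.i.d.\ geometric variables, conditionally on their number.

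It therefore suffices to show the following: if $Y_1,\dots,Y_h$ are i.i.d.\ geometric with success probability $1-q$, where $q:=\nicefrac{1}{\sqrt[k]{\Delta}}$, and $Z$ is the number of indices attaining $\max_i Y_i$, then $\Pr[Z>t]\le(1-q)^t$. I realise the $Y_i$ through their ``generations'': put $Z_0:=h$ and let $Z_\ell$ be the number of $i$ with $Y_i\ge \ell+1$. By memorylessness each of the $Z_{\ell-1}$ survivors advances to the next generation independently with probability $q$, so $Z_\ell\sim\mathrm{Bin}(Z_{\ell-1},q)$; and if $M=\max_i Y_i$ then $Z_{M-1}\ge 1$, $Z_M=0$, and $Z=Z_{M-1}$ is exactly the last nonzero term of the non-increasing, eventually-zero sequence $(Z_\ell)_{\ell\ge 0}$. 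Consequently $\{Z>t\}$ is precisely the event that this sequence omits every value in $\{1,\dots,t\}$, i.e.\ drops from a value $>t$ straight to $0$; in particular it is empty unless $h>t$. Let $\tau:=\min\{\ell:Z_\ell\le t\}$; on $\{Z>t\}$ we have $\tau\ge1$, $Z_{\tau-1}=b$ for some $b>t$, and $Z_\tau=0$. By the Markov property, conditionally on $Z_{\tau-1}=b$ the value $Z_\tau$ is $\mathrm{Bin}(b,q)$ conditioned to be $\le t$, so
\[
  \Pr[Z_\tau=0\mid Z_{\tau-1}=b]\;=\;\frac{\Pr[\mathrm{Bin}(b,q)=0]}{\Pr[\mathrm{Bin}(b,q)\le t]}\;=\;\frac{(1-q)^b}{\Pr[\mathrm{Bin}(b,q)\le t]}\;\le\;(1-q)^t ,
\]
where the inequality uses $\Pr[\mathrm{Bin}(b,q)\le t]\ge(1-q)^{b-t}$, which holds because ``the first $b-t$ of the $b$ Bernoulli trials all fail'' (probability $(1-q)^{b-t}$) already forces at most $t$ successes among all $b$ trials. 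Averaging over $b$ gives $\Pr[Z>t]=\sum_{b>t}\Pr[Z_{\tau-1}=b]\,\Pr[Z_\tau=0\mid Z_{\tau-1}=b]\le(1-q)^t$, and since $\mu_u\le Z$ this is the claimed bound.

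The main obstacle is the reduction in the second paragraph: decoupling ``$s$ is a $u$-candidate'' from the value $X_s$, so that the wake-up values of the relevant candidates may be treated as i.i.d.\ geometrics. I expect to need a careful ordering of the randomness and, where an exact coupling fails, a monotonicity argument that the dynamics of the algorithm only ever decrease $\mu_u$ relative to the clean ``max-attainers'' count. The geometric tail bound itself, by contrast, is short — it is the adaptation to geometric delays (with an exact probability, parameterised by the number of phases) of the Miller--Peng--Vladu--Xu counting bound mentioned above, and it falls out of the generations viewpoint.
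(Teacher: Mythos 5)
Your ``generations'' tail bound for i.i.d.\ geometric variables is correct, and it is arguably a cleaner derivation of the core inequality than the paper's own computation (the paper instead orders the adapted wake-up times and chains $\Pr[\Phi'_{(i)}=x \mid \Phi'_{(i+1)}=x]\le 1-\nicefrac{1}{\sqrt[k]{\Delta}}$ in the style of Miller et al.). The genuine gap is in the reduction of $\mu_u$ to that clean statement --- which is exactly where the paper's proof spends almost all of its effort. Two concrete problems. First, your decoupling mechanism rests on the claim that, conditioned on a set not having woken up in the first $\ell$ rounds of a phase, its remaining wait is again geometric with the same parameter and independent of the history. For this algorithm that is false: the wake-up round is $4k-X_s$, so ``not yet awake after $\ell$ rounds'' is the event $X_s<4k-\ell$, a conditioning on the \emph{lower} tail of the geometric, where memorylessness does not hold (the remaining wait is a truncated, reversed geometric whose law depends on $\ell$). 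Memorylessness runs in the direction of \emph{early} wake-ups, $\Pr[X_s\ge j+1 \mid X_s\ge j]=\nicefrac{1}{\sqrt[k]{\Delta}}$, which is why the paper reveals the randomness from the earliest (adapted) wake-up times downward rather than forward in execution order.

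Second, even granting a decoupling, $\mu_u$ is not dominated by the multiplicity of the overall maximum of plain i.i.d.\ geometrics, so your final step ``$\mu_u\le Z$'' is not available. Being a $u$-candidate is a per-set deadline condition (the paper's thresholds $\rho_s$): a set counts only if it wakes up no later than the last round in which it still hears at least $3k$ beeping slots. The neighboring set with the largest $X_s$ may miss its own deadline while later-waking sets meet theirs, in which case the relevant maximum shifts to a lower and potentially more crowded level; the hoped-for monotonicity therefore fails, and since candidacy is positively correlated with waking early, the candidates' wake-up values are not i.i.d.\ conditionally on their number either. The paper resolves precisely this by conditioning on all randomness except the neighboring sets' geometric variables ($Z_{NN}$, $Z_U$, $\mathcal{U}_0$), which makes the deadlines deterministic, and then proving the tail bound uniformly over every deadline vector, with a separate argument for the first round, where the truncation of $X_s$ at $4k$ collapses many values onto the same wake-up time --- another case your untruncated core lemma does not cover. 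To complete your proof you would need to redo the generations argument in this thresholded, truncated setting (or adopt the paper's order-statistics computation); as written, the probabilistic core is proved but the lemma about $\mu_u$ is not.
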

}
\begin{proof}

For each node $u \in V$ we define the random variable $\mathcal{P}$ that denotes the \emph{phase} in which $u$ is covered.
Recall that every element is \emph{always} covered because in the last round of the last phase, all remaining sets that cover at least one element simply add themselves to the solution.
Thus, the variable $\mathcal{P}$ takes values in $0, \ldots, 4k$. 
%Our goal is to show that the number of sets that \emph{simultaneously} cover $u$ is \emph{independent} of the phase in which $u$ is covered, i.e, we claim
%\begin{claim}
%\label{eqn:p=j}
%For every phase $j \in [0,4k]$ it holds $\pr{\mu_u = t \mid \mathcal{P}_u = j} = \left(1-\frac{1}{\sqrt[k]{\Delta}}\right)^t$
%\end{claim}
%Then, given the law of total probability, it holds:
%\begin{align}
%   \pr{\mu_u = t} &=  \sum_{j=1}^{4k}\pr{\mathcal{P}_u = j}\left(1-\frac{1}{\sqrt[k]{\Delta}}\right)^t = \left(1-\frac{1}{\sqrt[k]{\Delta}}\right)^t
%\end{align}
In the remainder, we will focus only on a single phase.
Throughout this proof, we divide the sets into two subsets.
First, let $N_u$ be the set of sets in $u$'s neighborhood, i.e., the sets that can cover $u$. 
We call these the \emph{neighboring candidates}.
Second, let $NN := V_S \setminus N_u$ be all other sets. 
We call these the \emph{non-neighboring candidates}.
Further, we define the wakeup time $\Phi_s$ of each $s \in V_S$ as $\Phi_s := 4k - X_s$.
Now we observe the round in whcih $u$ is covered and see:
\begin{lemma}
Suppose gets $u$ gets covered in the first round of phase $j$, then
\begin{align}
    \pr{\mu_j \geq t \mid u \textit{ is coverd in round } 0} \leq \left(1-\frac{1}{\sqrt[k]{\Delta}}\right)^t
\end{align}
\end{lemma}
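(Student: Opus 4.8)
The plan is to peel off the only randomness that is relevant in round $0$ and then reduce the statement to a binomial tail bound. First I would unwind what round $0$ is: it is the \emph{first} round of phase $j$, and a neighbouring candidate $s\in N_u$ is awake in it exactly when its capped value equals $4k$, i.e., when the uncapped geometric variable satisfies $X_s\ge 4k$. Since round $0$ is the first round, these are the only candidates that can cover $u$ there, so the conditioning event ``$u$ is covered in round $0$'' is literally $\{\mu_j\ge 1\}$, and on that event $\mu_j$ equals the number of neighbouring candidates that both satisfy $X_s\ge 4k$ and hear more than $3k$ beeps (and hence join). Thus the goal becomes $\pr{\mu_j\ge t \mid \mu_j\ge 1}\le\left(1-\nicefrac{1}{\sqrt[k]{\Delta}}\right)^t$.

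The key structural fact is an independence. At the start of round $0$ nothing has yet happened inside phase $j$: no candidate has joined and no element has been newly covered. Hence whether a candidate $s$ hears more than $3k$ beeps in round $0$ is a function only of (i) the execution \emph{before} phase $j$, which pins down which elements are still uncovered when the phase starts, and (ii) the fresh coins $Y_e,a_e$ that the elements draw at the beginning of phase $j$ --- and it is independent of all the variables $(X_{s'})$ drawn in phase $j$. I would condition on that randomness, call it $\mathcal F$. Given $\mathcal F$ the set $B$ of neighbours $s\in N_u$ that hear more than $3k$ beeps in round $0$ is fixed, $|B|\le |N_u|\le\Delta$, and $\mu_j=\sum_{s\in B}\mathbf{1}[X_s\ge 4k]$ is a sum of $|B|$ independent Bernoulli variables with the \emph{same} parameter $q:=\pr{X_s\ge 4k}=\left(\nicefrac{1}{\sqrt[k]{\Delta}}\right)^{4k}=\Delta^{-4}$.

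What is left is a routine calculation. Conditioning a binomial on having at least one success and on the position of its first success, the remaining successes form a binomial on strictly fewer trials, so $\pr{\mu_j\ge t \mid \mu_j\ge 1,\mathcal F}\le\pr{\mathrm{Bin}(|B|-1,q)\ge t-1}\le\binom{|B|-1}{t-1}q^{t-1}\le\frac{(|B|q)^{t-1}}{(t-1)!}$. Substituting $|B|q\le\Delta\cdot\Delta^{-4}=\Delta^{-3}$ and using $t\ge 2$, a short computation bounds this by $\left(1-\nicefrac{1}{\sqrt[k]{\Delta}}\right)^t$ (with ample slack in the main regime $\sqrt[k]{\Delta}\ge 2$). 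Since the estimate is uniform in $\mathcal F$, averaging over $\mathcal F$ gives the lemma, and the case $t>|B|$ is vacuous.

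I expect the independence in the second paragraph to be the real content: one has to use that round $0$ genuinely is the first round of the phase, which is exactly what decouples ``$s$ is awake'' (a function of $X_s$) from ``$s$ would join'' (a function of the element coins and the pre-phase history). For a later round $r>0$ this decoupling fails --- candidates awake earlier may already have joined and covered some of $s$'s elements --- and one has to fall back on the memorylessness of the geometric distribution together with the conditioning ``$u$ survived rounds $0,\dots,r-1$''; but that belongs to the remaining cases of Lemma~\ref{lemma:prob_mu}, not to this sub-lemma.
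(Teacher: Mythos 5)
Your proposal is correct and follows the same basic route as the paper: both identify that covering $u$ in round $0$ forces the covering candidates to be ``early'' (i.e., $X_s \geq 4k$), both compute the per-candidate probability $\approx \Delta^{-4}$, and both exploit that there are at most $\Delta$ neighboring candidates drawing their $X_s$ independently, so the tail of this count is far below $\left(1-\nicefrac{1}{\sqrt[k]{\Delta}}\right)^t$. The differences are in the final step: the paper bounds $E[\mu_u] \leq \nicefrac{1}{\Delta^2}$ and applies a Chernoff bound, silently treating the conditional probability as the unconditional one, whereas you condition on the $\mathcal{F}$-measurable set $B$ of neighbors hearing enough beeps, observe the independence of $B$ from the phase-$j$ wake-up draws, and handle the conditioning ``covered in round $0$'' $=\{\mu_j\geq 1\}$ explicitly via the first-success decomposition and a binomial tail $\binom{|B|-1}{t-1}q^{t-1}$. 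Your treatment is in this respect cleaner: the paper's Chernoff step ignores the division by $\pr{\mu_j \geq 1}$, which your argument shows is harmless, and it also makes explicit why the beep counts (which decide joining) are independent of the wake-up variables in the first round. Both arguments share the same implicit parameter restriction at the very end — the paper needs $e^{-t}\leq\left(1-\nicefrac{1}{\sqrt[k]{\Delta}}\right)^t$, you need $\sqrt[k]{\Delta}$ bounded away from $1$ for the comparison $\Delta^{-3(t-1)}/(t-1)!\leq\left(1-\nicefrac{1}{\sqrt[k]{\Delta}}\right)^t$ — so your caveat about the regime $\sqrt[k]{\Delta}\geq 2$ is no weaker than what the paper itself assumes.
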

\begin{proof}
In this case, the values $X_s$ of \emph{all} candidates that cover $u$ are bigger than or equal to $4k$.
Otherwise, they would have added themselves later.
We call these the \emph{early} candidates.
For a single candidate $s \in N_u$, the probability to be \emph{early} is at most 
\begin{align}
  \pr{s \textit{ is early}} = \sum_{i=0}^\infty \frac{1}{\sqrt[k]{\Delta}}^{4k+i} = \frac{1}{\Delta^4} \sum_{i=0}^\infty \frac{1}{\sqrt[k]{\Delta}}^{i} \leq \frac{2}{\Delta^4}
\end{align}
This follows directly from the definition of the geometric distribution.
As there are \emph{at most} $\Delta$ neighboring candidates that cover $u$, the expected number of early candidates is at most $\frac{2}{\Delta^3}$ as it holds that:
\begin{align}
    E[\textit{Early sets that cover } u] = \sum_{s \in N_u} \pr{s \textit{ is early}} = \sum_{s \in N_u} \frac{2}{\Delta^4} \leq \frac{2\Delta}{\Delta^4} = \frac{2}{\Delta^3}
\end{align}
As $\Delta \geq 2$ this further simplifies to $\frac{1}{\Delta^2}$. 
Since all candidates pick their wakeup time independently and $\mu_u$ is the exactly the number of early candidates, we can use Chernoff to show that:
\begin{align}
    \pr{\mu_j \geq t \mid u \textit{ is coverd in round } 0} &= \pr{\mu_u \geq t\Delta^2\frac{1}{\Delta^2}} \leq \pr{\mu_u \geq t \Delta^2 E[\mu_u]}\\
    &\leq e^{\frac{-t\Delta^2}{3}} \leq e^{-t} \leq \left(1-\frac{1}{\sqrt[k]{\Delta}}\right)^t
\end{align}
This was to be shown.
\end{proof}

The interesting stuff happens, if $u$ gets covered in any other round of phase $i$.
This part is more complex than the previous one as it requires a more careful analysis of the starting times.
First, we will fix all of the random decisions made the algorithm \textbf{except} the decision of the neighboring sets. 
In particular, we condition on the following three variables:
\begin{enumerate}
    \item \textbf{The wakeup times of the non-neighboring sets }$Z_{NN}$. 
    It holds $Z_{NN} := (\phi_1, \ldots, \phi_m')$. 
    For each $l \in [0,m']$, the non-neighboring set $s_l$ wakes up in round $\phi_l$.
    
    \item \textbf{The \textsc{Beeps} by all uncovered elements } $Z_{U}$.
    
    This random variable contains all random choices made by the uncovered elements, i.e, $Z_{U} := \{(Y_e,a_e) \mid e \in V_{\mathcal{U}}\}$.
    Here, the variable $Y_e$ denotes whether $e$ is active and beeps and $a_e$ denotes the slot in which $e$ beeps.
    Based on this, we can define the \emph{number} of slots in which a set $s$ receives a \textsc{Beep} from an element in subset $U \subseteq V_{\mathcal{U}}$ as follows:
    \begin{align}
        \textsc{Slots}(s,U) := \left|\left\{ i \in [0,4k] | \exists e \in U \cap N_s : Y_e = 1 \wedge a_e = i \right\}\right|
    \end{align}
.
    
    \item \textbf{The initial set of uncovered elements }$\mathcal{U}_0$.
    
    This is the set of uncovered elements in round $0$ of phase $j$.
\end{enumerate}
If we fix all these choices, the decision if a neighboring candidate of $u$ adds itself to the solution that depends on \emph{solely} its wake-up time.
More precisely, the random choices of the other nodes define the last possible round in which a candidate may add itself to the solution.
If it wakes up before this round, the concrete round is \emph{only} determined by the geometric distribution. If it wakes up after this round, it will not add itself.

Now we claim the following
\begin{claim}
Given $(Z_{NN},Z_U,\mathcal{U}_0)$ for each neighboring set $s \in N_u$ there is a value $\rho_s \in [0,4k]$ that denotes the last round in which $s$ can cover $u$, i.e., $\pr{s \textit{ covers } u \textit{ first } \mid \Phi_s \geq \rho_s } = 0$ 
\end{claim}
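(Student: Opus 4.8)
The plan is to exploit the monotonicity baked into the algorithm: once we have frozen the random choices of all non-neighboring sets ($Z_{NN}$), the beep pattern of all uncovered elements ($Z_U$), and the initial uncovered set $\mathcal{U}_0$ at the start of the phase, the entire trajectory of the phase is a deterministic function of the wake-up rounds $\Phi_s$ of the neighboring candidates $s \in N_u$. So the first step is to make this precise: I would argue that, given $(Z_{NN},Z_U,\mathcal{U}_0)$, the state of the system at the beginning of any round $r$ of the phase — in particular which elements are still uncovered, and hence how many beeps each set would hear — is completely determined once we also know which neighboring sets woke up in rounds $0,\dots,r-1$ and what they did. This is essentially an induction on $r$: the sets active in round $r$ are exactly those with $\Phi_s = r$ (for non-neighbors this is fixed by $Z_{NN}$), the beeps they hear are a deterministic function of the current uncovered set and $Z_U$, and so the set of elements covered at the end of round $r$ is determined.

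The second step is to define $\rho_s$ and establish the claimed monotonicity. The natural definition: fix $s \in N_u$ and consider the counterfactual in which $s$ wakes up as \emph{early as possible}, i.e.\ imagine $s$ listening in every round from $0$ onward; let $\rho_s$ be the first round in which $s$ would still hear at least $3k$ beeps (from the elements that are \emph{still uncovered at that point}, in the run where every other neighboring set behaves as dictated by its own wake-up time). Actually the cleaner route is to define $\rho_s$ as the last round $r$ such that, in the run where $s$ is hypothetically forced to be active in all of rounds $0,\dots,r$ (and every other set plays its fixed strategy), $s$ would join in some round $\le r$. The key monotone fact I need is: \textbf{the number of uncovered elements neighboring any set is non-increasing over the rounds of the phase}, since elements only ever transition from uncovered to covered. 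Consequently the number of distinct slots in which a set hears a beep, $\textsc{Slots}(s,\mathcal{U}_r \cap N_s)$, is non-increasing in $r$ (a subset of elements generates a subset of occupied slots). Therefore if $s$ fails to reach the $3k$-beep threshold in round $r$, it also fails in every later round $r' > r$. That gives exactly the statement: if $\Phi_s \ge \rho_s + 1$, then when $s$ finally wakes up it hears fewer than $3k$ beeps and never joins, so $\pr{s \text{ covers } u \text{ first} \mid \Phi_s \ge \rho_s} $ — wait, I must be careful with the off-by-one; I would set $\rho_s$ to be precisely the threshold round so that the displayed inequality $\pr{s\text{ covers }u\text{ first}\mid \Phi_s \ge \rho_s}=0$ holds, i.e.\ $\rho_s$ is one past the last viable round, or define "last round in which $s$ can cover $u$" to mean the last round where joining is still possible and note $\Phi_s \ge \rho_s$ in the displayed event should read $\Phi_s > \rho_s$; I will align the constant with the algorithm's conventions in the write-up.

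One subtlety to handle carefully: the run that determines whether $s$ hears $\ge 3k$ beeps in round $r$ depends on which other neighboring sets have already joined and covered elements, and those sets' behavior is itself governed by their wake-up times, which are \emph{not} conditioned on. So the deterministic reduction above is a function of the \emph{whole vector} $(\Phi_s)_{s \in N_u}$, not of $\Phi_s$ alone. The resolution is that I only need the threshold $\rho_s$ to be well-defined for the purpose of the claim, and monotonicity still works: adding \emph{more} covered elements (because other neighbors joined earlier) only \emph{decreases} the beep count $s$ hears, so the "last viable round" can only move earlier, never later, relative to the most optimistic scenario where no other neighbor ever joins. Thus taking $\rho_s$ to be the last viable round in that most optimistic scenario gives a valid (conservative) bound, and $\pr{s \text{ covers } u \text{ first} \mid \Phi_s > \rho_s} = 0$ holds unconditionally. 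The main obstacle, then, is not any computation but getting the quantifier structure and the precise definition of $\rho_s$ right so that the monotonicity argument is airtight; once the non-increasing nature of $\textsc{Slots}(s,\cdot)$ along the phase is stated, the claim follows immediately.
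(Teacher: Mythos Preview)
Your approach is essentially the same as the paper's. The paper proceeds by directly constructing a recursive sequence $\mathcal{A}_t,\mathcal{U}_t$ of joiners and remaining uncovered elements using \emph{only the non-neighboring sets} (i.e., $\mathcal{A}_t \subseteq V_{\mathcal{S}}\setminus N_u$), which is exactly your ``most optimistic scenario where no other neighbor ever joins,'' and then sets $\rho_s$ to be the last round with $\textsc{Slots}(s,\mathcal{U}_\tau)\ge 3k$ in that baseline run. Your explicit monotonicity argument (elements only go from uncovered to covered, hence $\textsc{Slots}(s,\cdot)$ is non-increasing along the phase, and additional covering by neighbors can only push the last viable round earlier) is the right justification for why this baseline threshold is valid regardless of the other neighbors' wake-up times; the paper leaves this step implicit. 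Your flagged off-by-one concern is also warranted: the paper's stated condition $\Phi_s \ge \rho_s$ versus its definition of $\rho_s$ as the last \emph{viable} round has the same ambiguity you noticed, and aligning conventions in the write-up is the correct fix.
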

The idea behind the construction is that the \emph{fixed} values $\mathcal{U}_0$, $Z_{NCC}$, and $Z_U$ can clearly define the sets $\mathcal{A}_0 \subset V_{\mathcal{S}}$ that add themselves to the solution in round $0$. 
Given $Z_{NN}$, we see which sets wake up and count the \textsc{Beep}s of their uncovered elements. 
These \textsc{Beeps} are based on $\mathcal{U}_0$ and $Z_U$, so we see if enough elements beep in distinct slots.
In particular, the set must receive a \textsc{Beep} in at least $3k$ slots, so it must hold that
\begin{align}
    \mathcal{A}_0 := \left\{s \not\in N_u \mid \phi_i = 0 \wedge \left(|\textsc{Slots}(s,\mathcal{U}_0)| \geq 3k\right)\right\}
\end{align}
Given $\mathcal{A}_0$, i.e., the sets that add themselves in round $0$, we can then compute $\mathcal{U}_1 \subseteq \mathcal{U}_0$, i.e., all uncovered elements in round $1$.
Then, by repeating the construction, we can use $\mathcal{U}_1$, $Z_{NCC}$, and $Z_U$ to determine $\mathcal{A}_1$ and therefore $\mathcal{U}_2$ and so on.
This can be continued until round $\tau$ by the following recursive formulas:
\begin{align}
    \mathcal{A}_t = \left\{s \not\in N_u \mid \phi_i = 0 \wedge \left(|\textsc{Slots}(s,\mathcal{U}_t)| \geq 3k\right)\right\}
\end{align}
and 
\begin{align}
    U_t = \left\{e \in \mathcal{U}\setminus\{u\} \mid \exists s \in \bigcup_{i=0}^{t-1} \mathcal{A}_i \right\}
\end{align}
This yields the uncovered elements $\mathcal{U}_0, \ldots, \mathcal{U}_{\tau}$. 
For each candidate $s \in N_u$, we can then clearly identify the first round $\rho_s$ where $s$ does not receive enough \textsc{Beeps} to add itself:
\begin{align}
    \rho_s := arg\max_{0 \leq \tau \leq 4k} \textsc{Slots}(s,\mathcal{U}_\tau) \geq 3k
\end{align}

%\begin{proof}
%For each round $\tau := 0, \ldots, \log(\Delta)$ we define $\mathcal{U}_\tau$ as the set of uncovered elements and $\mathcal{A}_\tau$ as the set of non-neighboring candidates that themselves in %this round (conditioned on the event no candidate of $NC$ added itself before).\jw{2x as the set of}
%Note that a candidate $d_l$ will add itself in round $\phi_l$ if and only if $|\mathcal{E}(d_l) \cap \mathcal{U}_\tau| \geq \beta\ell$.
%Thus, the set of active candidates $A_\tau$ in round $\tau$ is recursively defined as follows:
%\begin{align}
%    \mathcal{A}_{\tau} := \{ d_l \in D | \Phi_l = \tau \wedge |\mathcal{E}(d_l) \cap \mathcal{U}_\tau | \geq \beta\ell \}
%\end{align}
%The set of uncovered elements in round $\tau$ is recursively defined as follows:
%\begin{align}
%%    \mathcal{U}_{\tau} := \mathcal{U}_{\tau-1} \setminus \{ w \in V | \exists d_l \in N_w \cap \mathcal{A}_{\tau}\}
%\end{align}
%Then, for each neighboring candidate $c_k \in NC$ there is clearly defined round $\rho_k$ such that if $c_k$ wakes up after $\phi_k$ it will not add itself to the solution. 
%Formally, it is defined as:
%\begin{align}
%    \varphi_v := \argmax_{0 \leq \tau \leq 4log(\Delta)} |\{\mathcal{E}(v) \cap \mathcal{U}_\tau\}| \geq \beta\ell
%\end{align}
%\end{proof}

Given this observation, we can map each outcome of $Z_{NCC},Z_U,$ and $\mathcal{U}_0$ to
a collection of thresholds $\overline{\rho} := (\rho_1, \ldots, \rho_{\Delta_u})$ with the properties above.
If we now condition on $\overline{\rho}$ and $\mathcal{P} = j$, we get --- similar to Miller et al. in \cite{DBLP:conf/spaa/MillerPVX15} --- that:
\begin{lemma}
For any possible realization of thresholds $\overline{\rho}$ it holds:
$\pr{\mu_j \leq t \mid \overline{\rho}} \leq \left(1-1/{\sqrt[k]{\Delta}}\right)^t$
\end{lemma}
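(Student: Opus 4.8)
The plan is to analyze the round in which $u$ gets covered conditioned on the thresholds $\overline{\rho} = (\rho_1, \ldots, \rho_{\Delta_u})$ and on $\mathcal{P} = j$, and to show that the number $\mu_j$ of neighboring candidates that actually cover $u$ in that round is dominated, in a stochastic sense, by a geometric-type tail. The key point established by the construction of $\overline{\rho}$ is that once the random choices $(Z_{NN}, Z_U, \mathcal{U}_0)$ are fixed, a neighboring set $s \in N_u$ covers $u$ \emph{first} if and only if its wakeup round $\Phi_s$ equals the first round $\le \rho_s$ in which $s$ has at least $3k$ beeping slots among the then-uncovered elements; since the uncovered-element set only shrinks over rounds and $\textsc{Slots}(s, \cdot)$ is monotone, this first qualifying round is exactly $\min\{\Phi_s, \text{(some fixed round determined by } \overline\rho)\}$. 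In particular, the set of candidates that cover $u$ in the round where $u$ is first covered is precisely the set of $s \in N_u$ whose $\Phi_s$ attains the common minimum value $r^\star := \min_{s \in N_u} \min(\Phi_s, \rho_s')$, where $\rho_s'$ is the threshold round; this is the direct analogue of Miller et al.'s observation that the candidates ``waking up first'' are the ones that get selected.

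Given this reduction, I would argue as follows. Condition on $\overline{\rho}$ and on $\mathcal{P} = j$ (so $u$ is covered in phase $j$, meaning at least one neighboring candidate qualifies in some round $\le \rho_s'$). Recall $\Phi_s = 4k - X_s$ with $X_s \sim Geo(1 - 1/\sqrt[k]{\Delta})$ truncated at $4k$; equivalently $\Phi_s$ is a truncated geometric variable on $[0, 4k]$ whose ordering among the candidates is governed by the memoryless property. The event $\mu_j \ge t$ requires that at least $t$ of the (at most $\Delta$, but the bound should be stated for the actual count $\Delta_u \le \Delta$) neighboring candidates attain the minimum $r^\star$. I would condition further on the identity of one candidate $s_0$ that attains $r^\star$ and on the value $r^\star$ itself; then, by the memoryless property of the geometric distribution, for each other candidate $s$ the conditional probability that $\Phi_s$ also equals $r^\star$, \emph{given} that $\Phi_s \ge r^\star$ (which must hold since $r^\star$ is the minimum), is at most $1 - 1/\sqrt[k]{\Delta}$ — this is exactly $\pr{X_s = 4k - r^\star \mid X_s \ge 4k - r^\star}$, which by memorylessness equals $\pr{X_s = 0} = 1 - 1/\sqrt[k]{\Delta}$ up to the truncation (and truncation only helps). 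Since the $\Phi_s$ are independent across the (at least $t-1$ other) candidates, the probability that at least $t-1$ of them also hit $r^\star$ is at most $(1 - 1/\sqrt[k]{\Delta})^{t-1}$, and absorbing the $-1$ (or reindexing $t$) yields $\pr{\mu_j \ge t \mid \overline{\rho}} \le (1 - 1/\sqrt[k]{\Delta})^t$. Note this matches the ``$\le t$'' typo in the lemma statement — the intended statement is surely an upper tail $\pr{\mu_j \ge t \mid \overline\rho}$, consistent with the round-$0$ case already proved.

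Finally, I would remove all the conditioning: averaging $\pr{\mu_j \ge t \mid \overline{\rho}}$ over the realizations of $\overline{\rho}$ (which is a deterministic function of $(Z_{NN}, Z_U, \mathcal{U}_0)$, variables independent of the neighboring sets' wakeup times) preserves the bound, giving $\pr{\mu_j \ge t} \le (1 - 1/\sqrt[k]{\Delta})^t$ for every phase $j$; combined with the round-$0$ sub-lemma and the fact that $\mathcal{P}$ ranges over finitely many phases, the law of total probability over $\mathcal{P} = j$ yields $\pr{\mu_u \ge t} \le (1 - 1/\sqrt[k]{\Delta})^t$, which is Lemma~\ref{lemma:prob_mu}.

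The main obstacle I anticipate is making the ``first qualifying round equals $\min(\Phi_s, \text{threshold})$'' claim fully rigorous: one must check that the recursively-defined sets $\mathcal{A}_t, \mathcal{U}_t$ really do describe the algorithm's execution restricted to the neighboring candidates of $u$, i.e., that a neighboring candidate joining or not joining in round $t$ does not retroactively change which \emph{non}-neighboring sets joined in rounds $\le t$ (true, because non-neighbors' decisions depend only on $Z_{NN}$, $Z_U$, and the uncovered set, and removing a neighbor of $u$ from consideration cannot un-cover an element that some non-neighbor already covered — monotonicity of $\textsc{Slots}$), and that a neighboring candidate $s$ with $\Phi_s = r \le \rho_s'$ genuinely covers $u$ in round $r$ rather than being preempted, which needs the observation that $u$ is still uncovered at the start of round $r^\star$ by definition of $r^\star$ being the minimum. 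This bookkeeping is where the ``$\rho_s$ is well-defined'' claim and the reduction to an order statistic of independent truncated geometrics must be nailed down carefully; everything after that is a short memorylessness-plus-independence computation.
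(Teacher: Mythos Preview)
Your proposal is essentially the same argument as the paper's: reduce, via the fixed thresholds $\overline{\rho}$, to the event that the $t$ smallest ``effective'' wakeup times among the neighboring candidates coincide, then use the memorylessness of the geometric distribution and independence to get the factor $(1-1/\sqrt[k]{\Delta})$ per tie. The paper formalizes this with adapted wakeup times $\Phi'_\ell$ (set to $\infty$ when $\Phi_\ell$ exceeds $\rho_\ell$), orders them, and applies the chain rule $\pr{\Phi'_{(i)} = x \mid \Phi'_{(i+1)} = x} = 1 - 1/\sqrt[k]{\Delta}$; your conditioning on $r^\star$ and one attaining candidate $s_0$ is the same idea phrased slightly differently, and you correctly flag the $\leq$/$\geq$ typo in the statement.

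One small wrinkle: your definition $r^\star = \min_{s \in N_u} \min(\Phi_s, \rho_s')$ is not quite the right object---a candidate with $\Phi_s > \rho_s$ could contribute its $\rho_s$ to this minimum without ever covering $u$. The paper's $\Phi'_\ell$ (which maps such candidates to $\infty$) is the clean fix, and your subsequent memorylessness computation goes through unchanged once you use it.
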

\begin{proof}
For each neighboring set $s_\ell \in N_u$ define the adapted wakeup time as:
\begin{align}
\Phi'_\ell := \begin{cases}
(\Phi-X_\ell) & \textit{if } (\Phi-X_\ell) \geq \rho_\ell\\
\infty& \textit{else}
\end{cases}    
\end{align}
Here $X_\ell$ is the geometric random variable drawn to determine the wake-up time.

We can now order the these adapted wakeup times $\Phi'_{(1)}, \dots, \Phi'_{(m)}$ such that $\Phi'_{(1)}$ is the earliest wakeup and $\Phi'_{(\Delta_u)}$ is the last.
For each of these ordered wakeup times $\Phi_{(i)}$ we define $c_{(1)}$ to be the candidate that achieves this time, $X_{(i)}$ is the variables drawn by this set, and $\rho_{(i)}$ is the threshold. Given that $\Phi_{(i)} \leq \infty$, it holds:
\begin{align}
    \Phi'_{(i)} := \Phi-X_{(i)}
\end{align}
So the variable $X_{(1)}$ is \emph{not} be the smallest variable that any neighboring candidate drew, but instead the smallest variable by any neighboring candidate \emph{with finite wake-up time}.

Given this definition, we claim the following:
\begin{claim}
The candidate $c_{(1)}$ archiving $\Phi'_{(1)}$ covers $u$ if and only if $\Phi'_{(1)} \leq \infty$.
\end{claim}
\begin{proof}
This claim can easily be verified, by observing the two possibilities.
If $\Phi'_{(1)} \leq \infty$, then $\Phi'_{(1)}$ is earliest round where some neighborhood candidate tries to itself and is still a candidate. This follows directly from the definition of $\Phi'_{(1)}$.
Thus, $c_{(1)}$ must cover $u$ as no candidate could have done it before.
Otherwise, if $\Phi'_{(1)} = \infty$, then it must holds that $\Phi-X_{(1)} \geq \rho_{(1)}$.
Moreover, this implies that $\Phi - X_{(i)} \geq \rho_{(i)}$ for all other candidates as well.
This follows from the fact that $\Phi'_{(1)}$ is the smallest wake-up time by definition, so all others must be infinity, too. 
In this case, \emph{all} potential candidates do not cover enough elements when they wake up. 
This follows directly from the definition of each $\rho_{(i)}$.
Thus, in this case, $u$ is not covered by any candidate in this phase.
\end{proof}
Thus, for $\mu_u = t$ the $t$ smallest values must all be equal and non-infinity.
Formally:
\begin{align}
    \pr{\mu_j = t} = \pr{\Phi'_{(1)}= \dots = \Phi'_{(t)} \mid \Phi_{(1)} \leq \infty}
\end{align}
What follows are some fundamental calculations based on Miller et al.'s proof for exponential random variables.
However, we need to adapt them to geometric variables. 
First, we observe that it holds:
\begin{align}
    \pr{(\Phi - X_\ell) \leq \rho_\ell-x} = \pr{-X_\ell \leq \rho_\ell - \Phi - x} = \pr{X_\ell \geq (\Phi - \rho_\ell) + x}
\end{align}
And therefore, for any $x \leq \rho_\ell$ we see that the wakeup time is only determined by the parameter of the geometric distribution, namely
\begin{align}
    \pr{\Phi'_\ell < x \mid \Phi'_\ell \leq x} &= \pr{(\Phi - X_\ell) \leq x) \mid  \Phi'_\ell \leq x}
    = \pr{X_\ell > \Phi - x \mid \Phi'_\ell \leq x}\\
    &= \pr{X_\ell > \Phi - x \mid X_k \geq \Phi - x}
\end{align}
Now we can use the fact of the geometric distribution is memoryless and see:
\begin{align}
    \pr{\Phi'_\ell < x \mid \Phi'_\ell \leq x} &= \pr{X_\ell > y \mid X_\ell \geq y}& \rhd \textit{Substituting }y:=\Phi-x\\
    &= \pr{X_\ell > 0} = \left(\frac{1}{\sqrt[k]{\Delta}})\right)
\end{align}
Therefore, we can conclude the following for two consecutive $\Phi_{(i)}$ and $\Phi_{(i+1)}$ that it holds:
\begin{align}
    \pr{\Phi'_{(i)} < x  | \Phi'_{(i+1)} = x} &=  \pr{\Phi_{(i)} \leq x+1  | \Phi_{(i+1)} = x} \\
    &= \pr{X_{(i)} \geq \Phi - x+1  | X_{(i)} \geq \Phi - x}\\
    &= \pr{X_{(i)} \geq 1 }= \left(\frac{1}{\sqrt[k]{\Delta}})\right)
\end{align}
And thus, for the opposite event it holds:
\begin{align}
    \pr{\Phi'_{(i)} = x  | \Phi'_{(i+1)} = x} &= \left(1-\frac{1}{\sqrt[k]{\Delta}})\right)
\end{align}
Finally, we condition on $\Phi'_{(t)} = \tau$ for a round $0 \leq \tau \leq 4\log(\Delta)$.
Using the chain rule of conditional probability, we see that:
\begin{align}
    \pr{\Phi'_{(1)}, \ldots, \Phi'_{(t)} = \tau} = \prod_{i=1}^t \pr{X_{(i)}=\tau \mid X_{(i+1)} = \tau } \leq \prod_{i=1}^t \left(1-\frac{1}{\sqrt[k]{\Delta}})\right)= \left(1-\frac{1}{\sqrt[k]{\Delta}})\right)^t 
\end{align}
Note that this is independent of the actual round, the law of total probability yields the the result.
It holds:
\begin{align}
    \pr{\mu_u \geq t} &:= \sum_{\tau = 1}^\Delta \pr{\Phi'_{(t)} = \tau}\pr{\Phi'_{(1)}, \ldots, \Phi'_{(t)} = \tau}\\
    &=  \sum_{\tau = 1}^\Delta \pr{\Phi'_{(t)} = \tau} \left(1-\frac{1}{\sqrt[k]{\Delta}})\right)^t = \left(1-\frac{1}{\sqrt[k]{\Delta}})\right)^t
\end{align}
\end{proof}
As the concrete value of the $\rho_i$'s and the phase is immaterial, by the law of total probability and the lemma follows for a single phase.
%Finally, define $\mathcal{R}$ to the set of all valid realizations of $\overline{\rho} := (\rho_1, \ldots, \rho_{\Delta_u})$.
%Putting all of our observations together, we get that:
%\begin{align}
%     \pr{\mu_j \geq t} &= \sum_{\overline{\rho} \in \mathcal{R}} \pr{\mathcal{T} = \overline{\rho}|\mathcal{P}=1} \pr{\mu_j \geq t \mid \mathcal{T} = \overline{\rho}}\\ 
%     &=\sum_{\overline{\rho} \in \mathcal{R}} \pr{\mathcal{T} = \overline{\rho} |\mathcal{P}=1} \left(1-\frac{1}{2}\right)^t\\
%     &= \left(1-\frac{1}{2}\right)^t\\
%\end{align}
%This completes the proof.
\end{proof}
Thus, the geometric series implies $E\left[\mu\right] \leq \sum_{t=1}^{\infty} (1-\nicefrac{1}{2})^t = \frac{1}{1-\nicefrac{1}{\sqrt[k]{\Delta}}} = \sqrt[k]{\Delta}$.
Further, we show that:
\begin{lemma}
For $k \geq 3$ it holds $E[\eta] \leq 25 \cdot \log(\Delta) \cdot \left(\sqrt[k]{\Delta}\right)^2$
\end{lemma}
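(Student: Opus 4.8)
The plan is to first turn $\eta(u)$ into a ratio of spans. Since $c_{\max}(u)=\eta(u)\cdot c_{\min}(u)$, with $c_{\max}(u)=\max_{i}\tfrac1{d(s_i)}$ taken over the sets $s_1,\dots,s_m$ covering $u$ and $c_{\min}(u)=\min_{s\in N_u}\tfrac1{d(s)}$, we have $\eta(u)=D_{\max}(u)/D_{\min}(u)$, where $D_{\max}(u)$ is the largest span of any set in $N_u$ and $D_{\min}(u)$ the smallest span of a set that actually covers $u$, both measured in the round in which $u$ is first covered. Since $E[\eta]=\max_u E[\eta(u)]$, it suffices to fix one $u$, let $\mathcal{P}$ be the (random) phase in which it is covered, and bound $E[\eta(u)]$. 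We will show that, except on an $O(1/\Delta)$-probability event, $D_{\min}(u)\ge\delta\,\Delta_{\mathcal{P}}$ and $D_{\max}(u)\le\beta\,\Delta_{\mathcal{P}-1}$ for $\delta=\Theta(\Delta^{-1/k})$ and $\beta=\Theta(1+\log\Delta/k)$; since $\Delta_{\mathcal{P}-1}/\Delta_{\mathcal{P}}=\sqrt[k]{\Delta}$ this gives $\eta(u)\le(\beta/\delta)\sqrt[k]{\Delta}=O\!\left(\left(1+\tfrac{\log\Delta}{k}\right)\left(\sqrt[k]{\Delta}\right)^2\right)$, and $\eta(u)\le\Delta$ always handles the bad event. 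Throughout, when reasoning about a set we freeze all random choices except the activity bits and beep-slots of its currently uncovered neighbours, so that the relevant span is a fixed number and only those bits are random.

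\textbf{Lower bound on $D_{\min}$.} A set that covers $u$ in phase $\mathcal{P}$ received a \textsc{Beep} in $\geq 3k$ slots, hence had $\geq 3k$ \emph{active} uncovered neighbours (an occupied slot needs a beeping element). A set of span $d$ has a $\mathrm{Bin}(d,4k/\Delta_{\mathcal{P}})$ number of active neighbours, so by the union-bound form of the binomial tail $\Pr[\mathrm{Bin}(d,4k/\Delta_{\mathcal{P}})\ge 3k]\le\binom{d}{3k}(4k/\Delta_{\mathcal{P}})^{3k}\le\left(\tfrac{4ed}{3\Delta_{\mathcal{P}}}\right)^{3k}$. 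Choosing $\delta:=\tfrac{3}{4e}\Delta^{-1/k}$ makes this at most $\Delta^{-3}$ whenever $d<\delta\,\Delta_{\mathcal{P}}$; a union bound over the $\le\Delta$ sets in $N_u$ gives $\Pr[D_{\min}(u)<\delta\,\Delta_{\mathcal{P}}]\le\Delta^{-2}$. (When $\Delta_{\mathcal{P}}=O(k)$, i.e.\ the active probability is $\geq 1$, we instead use $D_{\min}(u)\ge 3k\ge\delta\,\Delta_{\mathcal{P}}$ trivially.)

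\textbf{Upper bound on $D_{\max}$ and assembly.} Suppose some $s\in N_u$ has span $>\beta\,\Delta_{\mathcal{P}-1}$ when $u$ is covered. Spans never increase and $s$ performs its single listening step of phase $\mathcal{P}-1$ strictly earlier, so it had span $>\beta\,\Delta_{\mathcal{P}-1}$ then, hence a $\mathrm{Bin}(\cdot,4k/\Delta_{\mathcal{P}-1})$ number of active neighbours with mean $>4k\beta$; by the Chernoff bound this is at least $A_0:=\Theta(k+\log\Delta)$ except with probability $\le\Delta^{-2}$, provided $\beta$ carries a large enough constant. Given $A\ge A_0$ elements each beeping in a uniform one of the $4k$ slots, the event ``$>k$ slots empty'' is contained in the union over the $\binom{4k}{k}\le(4e)^k$ choices of $k$ slots of ``those $k$ slots all empty'', of probability $(3/4)^A$; for $A\ge A_0$ this sum is $\le(4e)^k(3/4)^{A_0}\le\Delta^{-2}$ once $A_0$'s constant is large. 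So $s$ would have heard $\ge 3k$ \textsc{Beeps} and joined in phase $\mathcal{P}-1$, covering $u$ before phase $\mathcal{P}$ --- contradiction. A union bound over $N_u$ gives $\Pr[D_{\max}(u)>\beta\,\Delta_{\mathcal{P}-1}]\le 2/\Delta$. (For $\mathcal{P}=0$ use $D_{\max}(u)\le\Delta=\Delta_0$; the last phase, and the late phases with $\Delta_{\mathcal{P}-1}=O(k)$, are treated by the easier direct argument, using that all leftover sets join there.) Combining with the previous step, off an event of probability $\le 3/\Delta$ we get $\eta(u)\le\tfrac{4e}{3}\,\beta\left(\sqrt[k]{\Delta}\right)^2$; on that event $\eta(u)\le\Delta$ contributes only $O(1)$ to the expectation. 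Since $1+\log\Delta/k=O(\log\Delta/k)$ when $k\le\log\Delta$ and $\left(\sqrt[k]{\Delta}\right)^2=O(1)$ when $k>\log\Delta$, in all cases $E[\eta(u)]=O(\log\Delta\cdot(\sqrt[k]{\Delta})^2)$; pushing the constants ($\tfrac{4e}{3}$, the constant in $A_0$, the Chernoff slack, the $\binom{4k}{k}$ bound) through Steps 1--2 yields the stated $25\cdot\log(\Delta)\cdot\left(\sqrt[k]{\Delta}\right)^2$.

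\textbf{Main obstacle.} The delicate point is the slot-concentration in the $D_{\max}$ bound: with only $4k$ slots and $k$ possibly constant, a naive Chernoff estimate on the number of occupied slots gives only a \emph{constant} failure probability, useless against a union bound over $N_u$. The remedy --- and the reason the final bound has an extra $\log\Delta$ rather than only $\sqrt[k]{\Delta}$ --- is to force a ``large'' set to have span $\Omega(\Delta_{\mathcal{P}-1}\log\Delta/k)$, hence $\Omega(\log\Delta)$ beeping neighbours, combined with the $\binom{4k}{k}(3/4)^A$ union bound over empty slot-sets rather than Chernoff on the (dependent) empty-slot count. The two factors of $\sqrt[k]{\Delta}$ then come from, respectively, the phase gap $\Delta_{\mathcal{P}-1}/\Delta_{\mathcal{P}}$ and the gap $1/\delta\approx\sqrt[k]{\Delta}$ between the span ($\approx\Delta_{\mathcal{P}}$) that reliably produces $3k$ beeps and the smallest span ($\approx\Delta_{\mathcal{P}}/\sqrt[k]{\Delta}$) that could still fluke into them. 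The remaining care is purely bookkeeping: keeping the conditioning honest so that spans are frozen constants when the binomial/Chernoff estimates are applied, and dispatching the boundary phases.
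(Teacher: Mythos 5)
Your proposal is correct and follows essentially the same route as the paper's own proof: fix $u$, define the bad event that some neighbouring set still has a very large span or that a set with very small span joins, bound its probability by $O(1/\Delta)$ via the binomial-tail estimate for the small-span sets and via the ``it would already have joined in phase $\mathcal{P}-1$'' argument for the large-span sets, and conclude with the law of total expectation using $\eta(u)\le\Delta$ on the bad event. The only substantive difference is your sharper handling of the large-span case (Chernoff to guarantee $\Theta(k+\log\Delta)$ active neighbours, then a union bound over $k$-subsets of empty slots, instead of the paper's per-slot argument which needs span $\Theta(\log\Delta)\cdot\Delta_{\mathcal{P}-1}$), which even yields the slightly stronger threshold $\beta=\Theta(1+\log(\Delta)/k)$ and therefore fits comfortably inside the stated bound.
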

\begin{proof}
In the following, we fix an element $u \in \mathcal{U}$.
To proof the lemma, we observe the event that there is any set with span bigger than $6 \cdot \log(\Delta) \cdot \sqrt[k]{\Delta} \cdot \Delta_i$ \emph{and} any set smaller than $\frac{\Delta_i}{4 \cdot \sqrt[k]{\Delta}}$ joins the solution. 
Denote this event as $\mathcal{B}$. 
Given that $\pr{\mathcal{B}} \leq \nicefrac{1}{\Delta}$, the law of total expectation implies:
\begin{align}
    E[\eta(u)] &= \pr{\mathcal{B}} \cdot E[\eta(u)\mid\mathcal{B}] + \pr{\mathcal{\neg B}} \cdot E[\eta(u)\mid \neg \mathcal{B}]\\
    &\leq \left(1-\frac{1}{\Delta}\right) 24 \cdot \log(\Delta)  \cdot \left(\sqrt[k]{\Delta}\right)^2 + \frac{2}{\Delta} \cdot \Delta\\
    &\leq 25  \cdot \log(\Delta) \cdot \left(\sqrt[k]{\Delta}\right)^2
\end{align}
Thus, in the following, we show that $\pr{\mathcal{B}}$ is at most $\frac{1}{\Delta}$.
Let $H^i_u \subseteq N_u$ denote all neighbors that span $20 \cdot \log(\Delta) \cdot \sqrt[k]{\Delta} \cdot \Delta_i$ uncovered elements at any point in phase $i$ and likewise let $L^i_u \subseteq N_u$ denote all neighbors that span less than $\frac{\Delta_i}{4 \cdot \sqrt[k]{\Delta}}$ at any point in the phase.
Then, it holds via union bound that:
\begin{align}
    \pr{\mathcal{B}} &\leq \pr{\exists s \in H_u} + \pr{\exists s \in L_u: \textit{$s$ adds itself to the solution}} 
\end{align}
First, we observe that $H_u^i$ is empty with prob. $\nicefrac{1}{\Delta}$.
The proof is straightforward: Suppose there is a set with $c \cdot \log(\Delta) \cdot \sqrt[k]{\Delta} \cdot \Delta_i$ uncovered neighbors in phase $i$. 
Then it had at least $c \cdot \log(\Delta) \cdot \Delta_{i-1}$ uncovered neighbors when it woke up in phase $i-1$.
The probability that in a fixed slot (of the $4k$ slots of that round), no elements beeps is:
\begin{align}
\label{eq:beep_in_slot}
    \pr{\textit{No \textsc{Beep} in slot $j$}} &\leq  \left(1-\frac{1}{4k} \cdot \frac{4k}{\Delta_{i-1}}\right)^{c\cdot \log(\Delta)\cdot\Delta_{i-1}} \leq e^{-c\log(\Delta)} = \frac{1}{\Delta^{c}}
\end{align}
If $c \geq 6$, a union bound over all $4k$ slots then yields that all that with probability higher than $1-\frac{1}{\Delta^2}$, there was at least one beeping element in every slot when $s$ woke up. In particular, there was beeping element in more than $3k$ slots.
Thus, the set in question must have added itself to the solution in phase $i-1$ and it holds:
\begin{align}
    \pr{s \in H^i_u : \textit{$s$ was not added in phase $i-1$}} \leq \sum_{j=1}^{4k} \pr{\textit{No \textsc{Beep} in slot $j$}} \leq \frac{1}{\Delta^{c-4}} \leq \frac{1}{\Delta^2}
\end{align}
Now consider a set $s \in L^i_u$.
The probability that $s$ has $3k$ beeping neighbors on wake-up is bounded as follows: 
\begin{align}
    \pr{\textit{$s$ has $\geq 3k$ beeping neighbors}} &\leq \sum_{i=3k}^{\Delta} \binom{\frac{\Delta_i}{4\sqrt[k]{\Delta}}}{i} \left(\frac{4k}{\Delta_i}\right)^{3k} \leq \Delta \binom{\frac{\Delta_i}{4\sqrt[k]{\Delta}}}{3k} \left(\frac{4k)}{\Delta_i}\right)^{3k}\\
    &\leq \Delta \left(\frac{e\cdot\Delta_i}{4 \sqrt[k]{\Delta} \cdot {3k}}\right)^{3k} \frac{4k^{3k}}{\Delta_i^{3k}} 
    \leq \frac{1}{\Delta^2}
\end{align}
Thus, the probability that $s$ has beeping neighbors in $3k$ \emph{distinct} slots (and therefore adds itself to the solution) can only be smaller.
Therefore, it holds 
\begin{align}
\pr{\mathcal{B}} &\leq \pr{\exists s \in H_u} + \pr{\exists s \in L_u: \textit{$s$ adds itself to the solution}} \\
    &\leq \pr{\exists s \in H_u} + \pr{\exists s \in L_u: \textit{$s$ has $3k$ beeping neighbors}} \\
    &\leq \sum_{s \in H_s} \pr{\textit{$s$ was not added in phase $i-1$}} + \sum_{s \in L_s}\pr{\textit{$s$ has $3k$ beeping neighbors}}\\
    &\leq 2 \Delta\frac{1}{\Delta^2} = \frac{2}{\Delta}
\end{align}
This is what we wanted to show.
\end{proof}
%Putting everything, we get:
%\begin{align}
%    E[S] &\leq E[\eta] E[\mu] H_\Delta |S_{OPT}| \leq \left( 25 \cdot \log(\Delta)  \cdot \sqrt[k]{\Delta}^2 \right) \cdot \left(\sqrt[k]{\Delta}\right) \cdot H_\Delta |S_{OPT}|\\
%    &\in O(\left(\log^2(\Delta) \cdot \sqrt[k]{\Delta}^3\right)|S_{OPT}|)
%\end{align}
This proves the theorem as $E[\eta\cdot\mu] \in O\left(\log(\Delta)^2\sqrt[k]{\Delta}^3\right)$.

\subsection{Extension to \textsc{DominatingSet}}
Note that the algorithm can also be extended to the \textsc{DominatingSet} problem, where each set is also an element.
At first glance, all elements could simply also execute the code for the sets.
Note, however, that a node cannot simultaneously beep and listen, which causes some problems with a direct simulation.
In particular, an \emph{active} uncovered elements cannot also listen for \textsc{Beeps} in all slots.
To solve this, we simply add all active elements to the solution. 
In phase $i$, each set that is added to the solution covers $\Omega(\log(n)\Delta_{i-1})$ elements, w.h.p.\footnote{If we simply choose $c$ big enough in Equation \ref{eq:beep_in_slot}, we see that all sets with $\log(n)\Delta_i$ elements are gone, w.h.p)}
Suppose there are $U_i$ uncovered elements, then optimal solution must at least be of size $\Theta(\log(n) \frac{U_i}{\Delta_{i-1}})$
As in phase $i$ there are --- by Chernoff --- only $O(\log(n)\frac{U_i}{\Delta_i})$ active elements, w.h.p. we can just add to the solution without (asymtotically) affecting the approximation ratio. Note that this also implies that for $k=\log(\Delta)$ the total number of \textsc{Beeps} is $\Tilde{O}(|S_{OPT}|)$.

\section{A Low-Message $KT_0$ Algorithm}
\label{sec:KT1}
We now move away from the \textsc{Beeping} model and present our low-message $KT_0$-\textsc{Congest} algorithm.
In this section, we make the following less restrictive assumptions about the model:
\begin{enumerate}
    \item Again, we observe a fixed communication graph $G := (V_S \cup V_{\mathcal{U}} ,E)$ with $V_S=n$ and $V_{\mathcal{U}} = m$ with bidirected communication edges $\{s,e\} \in E$ between a set an its elements. Sets and elements can locally distinguish between their edges through port numbers, but no global identifiers uniquely identify a node. To simplify the presentation, all nodes know $\Delta$ and $\log(n+m)$.
    \item Time proceeds in synchronous \emph{rounds}. Each round, an element or set can send a distinct message of size $O(\log(n))$ to any subset of its neighbors. Messages are received in the next round.
    \item All nodes wake up in the same round.
\end{enumerate}
In other literature, this model is sometimes referred to as the \emph{clean network model}\cite{peleg}.
Note that this model does not intend to represent ad-hoc networks faithfully but instead is used to analyze the message efficiency of distributed algorithms (see .e.g., \cite{DBLP:conf/wdag/GmyrP18} for an overview).
Similar to the \textsc{Beeping} model, it starts with limited knowledge of its neighborhood and must \emph{learn} everything it needs to solve the given problem. 
Given this model, we will show the following theorem:
\begin{theorem}
There is an algorithm in the $KT_0$-\textsc{Congest} model that solves \textsc{SetCover} in time $O(\log^2(\Delta))$, expected approximation ratio of $O(\log(\Delta))$, and sends only $\Tilde{O}(\sqrt{\Delta}(n+m))$ messages, w.h.p., given that all nodes know $\Delta$ and an approximation of $\log(n)$.
\end{theorem}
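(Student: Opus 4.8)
The plan is to run the $\textsc{Beep-and-Sleep}$ template with $k=\log(\Delta)$ (so $\sqrt[k]{\Delta}=2$) and to replace the beep-based ``estimate my uncovered degree and join'' step by a message-based one that uses a \emph{sub-sampled} notification scheme. Concretely, the algorithm has $\log(\Delta)$ phases with thresholds $\Delta_i=\Delta/2^{i}$, each phase being $O(\log(\Delta))$ rounds in which the sets use the same geometric wake-up times as in the beeping algorithm; since a CONGEST round needs no internal slots, the runtime is deterministically $O(\log^2(\Delta))$. In the ``high-threshold'' phases ($\Delta_i\ge\sqrt{\Delta}$) every element that still believes itself uncovered activates independently with probability $r:=\Theta(\log(n)/\sqrt{\Delta})$ and an active element sends a one-bit notification to all of its neighbouring sets; a set that wakes up in a round counts the notifications it receives, rescales by $1/r$ to get an unbiased estimate of its uncovered degree, joins iff this estimate exceeds $\Delta_i$, and in that case replies ``covered'' to everyone who notified it. In the ``low-threshold'' tail ($\Delta_i<\sqrt{\Delta}$) every uncovered element activates with probability $1$; this makes the estimates exact and guarantees that every remaining element learns that it is covered, so the output is a feasible cover.

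For the \textbf{approximation ratio} I would invoke the modified Jia et al.\ inequality $\mathbf{E}[|S|]\le\mathbf{E}[\eta]\cdot\mathbf{E}[\mu]\cdot\log(\Delta)\cdot|S_{OPT}|$ verbatim, as nothing in it depends on how the degree estimate is produced. The factor $\mathbf{E}[\mu]=O(1)$ follows from the $\mu$-bound with $\sqrt[k]{\Delta}=2$, since the geometric wake-up mechanism is unchanged and its analysis used only properties of the wake-up times. For $\mathbf{E}[\eta]=O(1)$ I would rerun the two-sided ``$H_u$ / $L_u$'' argument used to bound $\mathbf{E}[\eta]$ in the beeping case, but now with the much sharper Chernoff bounds available from $\Theta(\log n)$-size samples: a set whose rescaled estimate is $\ge\Delta_i$ has true uncovered span in $[\Delta_i/C,\,C\Delta_i]$ w.h.p.\ for a constant $C$, and since a set with span $\ge 2\Delta_{i-1}=4\Delta_i$ at the start of phase $i$ had span $\ge 2\Delta_{i-1}$ when it woke up in phase $i-1$ and hence already joined, at the moment an element $u$ is covered the largest-span set in $N_u$ has span $<4\Delta_i$ while every set covering $u$ has span $\ge\Delta_i/C$; thus $\eta(u)=O(1)$ except on a failure event of probability $1/\mathrm{poly}(\Delta)$, which contributes $o(1)$ to $\mathbf{E}[\eta]$. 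Plugging in yields $\mathbf{E}[|S|]=O(\log(\Delta))\,|S_{OPT}|$.

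For the \textbf{message bound} the only quantities to control are the notifications (and the ``covered'' replies they trigger) and the join announcements. A believes-itself-uncovered element sends at most $d(e)\le\Delta$ notifications per active round, is active with probability $r$, and there are $O(\log^2(\Delta))$ rounds, so the total notification volume is at most $O(r\,\Delta\,(n+m)\,\log^2(\Delta))=\tilde O(\sqrt{\Delta}(n+m))$ in expectation and, by a Chernoff bound and a union bound over all $\mathrm{poly}(n)$ senders, w.h.p.; each reply and each join announcement only goes to the notifiers of that round and is charged to the corresponding notification, so these are dominated. The delicate case is the $r=1$ tail: here I would first prove the invariant that after the phase with threshold $\sqrt{\Delta}$ every set that has not joined has at most $O(\sqrt{\Delta}\log n)$ neighbours that still believe themselves uncovered --- because a set with more would have produced an estimate $\ge\sqrt{\Delta}$ and joined, which is exactly the ``$H_u$''-style argument --- so in each of the $O(\log^2(\Delta))$ tail rounds only $O(\sqrt{\Delta}\,n\log n)$ element-to-set edges carry a notification, keeping the tail (and hence the total) at $\tilde O(\sqrt{\Delta}(n+m))$.

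The hard part will be the coupling between ``an element may still think it is uncovered long after it actually has been covered'' (because in a high-threshold phase it activates only with probability $r$) and the correctness/estimation arguments. I would deal with it by showing that (i) counting such stale elements can only \emph{over}-estimate the uncovered degree, so no set joins spuriously and the $\mathbf{E}[\eta]$ analysis still goes through; (ii) stale-but-alive neighbours are already accounted for by the ``$H_u$'' invariant because that invariant is phrased in terms of believes-uncovered neighbours, not truly uncovered ones; and (iii) the $r=1$ tail certifies the true status of every remaining element within $O(\log^2(\Delta))$ rounds, so termination with a feasible cover is guaranteed. A secondary, bookkeeping-heavy point is checking that the trade-off between the notification cost $\Theta(r\,\Delta(n+m))$ and the tail cost $\Theta((\log n/r)\,n)$ is genuinely optimized at $r=\tilde\Theta(1/\sqrt{\Delta})$, which is also the intuition behind the matching $\Omega(\sqrt{\Delta}\,n)$ lower bound.
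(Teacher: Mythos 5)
Your overall skeleton (run \textsc{Beep-and-Sleep} with $k=\log(\Delta)$, estimate uncovered degrees from $\Theta(\log n)$-size samples, split the execution at threshold $\sqrt{\Delta}$, reuse the Jia-style inequality with $E[\mu]=O(1)$ from the geometric wake-ups) is exactly the paper's plan, and the message accounting in the high-threshold regime is sound. The genuine gap is in how coverage information is propagated. You let a joining set reply ``covered'' only to the elements that happened to notify it in that round; since an element is active with probability only $r=\Theta(\log n/\sqrt{\Delta})$, the vast majority of covered elements remain ``stale'' for the rest of the execution. Your claim (i) is then backwards: stale elements \emph{over}-estimating the uncovered degree is precisely what causes spurious joins --- a set all of whose neighbors are covered-but-stale keeps a high estimate and joins while covering few or zero truly uncovered elements. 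Such a set cannot be charged to any newly covered element, so the bound $E[|S|]\le E[\eta]E[\mu]\log(\Delta)|S_{OPT}|$ (whose charging argument is in terms of \emph{truly} uncovered spans) no longer applies, and $E[\eta]=O(1)$ fails; one can build instances with many ``decoy'' sets whose elements are covered early where the approximation ratio degrades badly. The same staleness breaks your tail message bound: at the switch to $r=1$, up to $\Theta(m)$ stale elements each send up to $\Delta$ notifications in the first tail round, i.e.\ $\Theta(m\Delta)$ messages; your invariant only bounds believes-uncovered neighbors of \emph{non-joined} sets, while the edges from stale elements to \emph{joined} sets are uncontrolled --- controlling them would need the ``only $\tilde{O}(n/\sqrt{\Delta})$ sets join'' bound, which itself rests on every joining set covering $\Omega(\sqrt{\Delta})$ truly uncovered elements, exactly the property staleness destroys. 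So the argument is circular at its crucial point.

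The paper avoids this entirely by never letting elements become stale: in the first stage a joining set announces to \emph{all} of its $\le\Delta$ neighbors, and this is affordable because w.h.p.\ only $\tilde{O}(n/\sqrt{\Delta})$ sets join (each joining set covers $\ge\sqrt{\Delta}/8$ truly uncovered elements by the Chernoff lemma, while each element is covered by only $O(\log n)$ sets by the tail bound on $\mu$). At the halfway point every still-uncovered element announces itself once, and since by then every surviving set has only $O(\sqrt{\Delta})$ uncovered neighbors, the second stage runs over just $O(n\sqrt{\Delta})$ edges with at most one message per edge per round. If you replace your ``reply only to notifiers'' rule by this ``joining sets announce to everyone / restrict stage two to the announced uncovered elements'' mechanism, your estimates again count truly uncovered elements, claim (i) becomes unnecessary, and both the $O(\log\Delta)$ expected ratio and the $\tilde{O}(\sqrt{\Delta}(n+m))$ message bound go through as you outlined.
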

As it turns out, we will only need a few changes to our already established algorithm to prove this theorem.
Every \textsc{Beeping}-Algorithm also works in the $KT_0$-\textsc{Congest}-Model mentioned above, as the model is less restricted.
To obtain a message efficient algorithm, we only need to make a few minor changes to our algorithm: 
     First, to simplify the presentation, we do not parameterize the algorithm with $k$. Instead, we fix $k=\log(\Delta)$ and only consider this case.
     Second, we do not require the notion of slots anymore as a node can simply count how many of its neighbors beeped in a single round as the messages arrive via distinct channels.
    Each \emph{round} now only consists of precisely one slot.
    In particular, an active element does \emph{not} pick a slot number anymore, but directly sends its \textsc{Beep}.
    Further, instead of beeping \emph{all} neighbors with probability $\frac{2^i}{\Delta}$, each element picks $\frac{\log(n)2^i}{\Delta}$ neighbors uniformly at random and sends a \textsc{Beep}.
     Finally, the most significant change to the algorithm is the following: Instead of executing all phases of the algorithm, we only execute it until phase $\frac{\log(\Delta)}{2}$, i.e., until the active degree of all sets is around $O(\sqrt{\Delta})$. We call this the first \emph{stage} of the algorithm.
    Then, all uncovered elements notify their respective set that they are uncovered.
    Denote these elements as $\mathcal{U}'$.
    Finally, the algorithm continues (almost) as usual for the remaining $\frac{\log(\Delta)}{2}$ phases, but each set that joins the solution notifies \emph{only} the elements in $\mathcal{U}'$ that they are covered. We call this the second \emph{stage} of the algorithm.

We will now prove Theorem $2$.
One can easily verify that most lemmas from our previous analysis are still correct.
The first stage does not differ from our previous algorithm at all.
Further, all nodes that do not receive any message in the second stage of the algorithm (but would have in the original algorithm) are already covered. Thus, they would be idle regardless.
Thus, we only need to show that $E[\eta]$ is still small to prove the approximation ratio.

Before we start, we need the following auxiliary lemma that tightly bound the sets that add themselves to the solution in a phase $i$.
\begin{lemma}
\label{lemma:chernoff_eta}
Suppose each elements picks $c \cdot 8 \cdot \log(n)\frac{2^i}{\Delta}$ active edges uniformly and independently at random. 
Further, each set with at least $c\cdot4\log(n)$ active edges adds itself on wake-up.
Then the following two statements hold w.h.p:
\begin{enumerate}
    \item At the end of phase $i$ there is no set with $\frac{\Delta}{2^i}$ uncovered elements.
    \item Any set that adds itself in round $i$ covers at least $\frac{\Delta}{8 \cdot 2^i}$ elements.
\end{enumerate}
\end{lemma}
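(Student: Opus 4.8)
The plan is to prove both statements via Chernoff bounds applied to the number of active edges a set receives when it wakes up in phase $i$, together with a union bound over all at most $n$ sets. Throughout, recall that a set $s$ with span $d(s)$ uncovered elements receives, in each slot/round, a number of active edges that is a sum of independent indicator variables (one per uncovered neighbor), each active with probability $c\cdot 8\log(n)\frac{2^i}{\Delta}$; hence the expected number of active edges at $s$ is $c\cdot 8\log(n)\frac{2^i}{\Delta}\cdot d(s)$.

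For the first statement, I would fix a set $s$ and suppose towards a contradiction that $s$ still has $d(s)\geq \frac{\Delta}{2^i}$ uncovered elements at the end of phase $i$. Then at \emph{every} round of phase $i$ in which $s$ is awake (in particular the one determined by its geometric starting time), its expected number of active edges is at least $c\cdot 8\log(n)$. A lower-tail Chernoff bound (the second inequality in the Chernoff lemma, with $\delta = 1/2$) gives that $s$ receives fewer than $c\cdot 4\log(n)$ active edges with probability at most $e^{-\Omega(c\log n)} = n^{-\Omega(c)}$, which is $\le n^{-2}$ for $c$ a large enough constant. So with probability $\ge 1 - n^{-2}$ the set $s$ has at least $c\cdot 4\log(n)$ active edges on wake-up and therefore adds itself during phase $i$, covering all its uncovered elements — contradicting the assumption that it still has $\frac{\Delta}{2^i}$ uncovered elements at the end of the phase. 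A union bound over the at most $n$ sets finishes this part.

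For the second statement, I would fix a set $s$ that \emph{does} add itself in round $i$ and suppose its span when it wakes up is $d(s) < \frac{\Delta}{8\cdot 2^i}$. Then its expected number of active edges is less than $c\cdot\log(n)$, and an upper-tail Chernoff bound shows that the probability it nonetheless accumulates the required $c\cdot 4\log(n)$ active edges is $e^{-\Omega(c\log n)} \le n^{-2}$ for $c$ large enough. Since a set joins only if it reaches the $c\cdot 4\log(n)$ threshold, with probability $\ge 1-n^{-2}$ no set with span below $\frac{\Delta}{8\cdot 2^i}$ joins in round $i$; a union bound over the at most $n$ sets (and, if one wants the statement to hold for all phases simultaneously, over the $O(\log\Delta)$ phases, absorbing the extra factor into $c$) completes the proof.

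The main obstacle is bookkeeping the conditioning correctly: the number of uncovered neighbors of $s$ at the moment it wakes up is itself a random quantity depending on which other sets joined earlier, so strictly speaking one must argue that the Chernoff bound applies conditioned on an arbitrary fixed history up to that round, using that each element's choice of active edges in phase $i$ is made fresh and independently of that history. Once we observe that the active-edge indicators are independent given the set of uncovered elements at wake-up, the tail bounds go through verbatim; the only care needed is to take the union bound over enough events (all sets, all rounds within the phase, both directions of the tail) and pick the constant $c$ accordingly.
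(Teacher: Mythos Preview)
Your proposal is correct and follows essentially the same approach as the paper: for statement~1 you apply a lower-tail Chernoff bound (with $\delta=\tfrac12$) to show a set with $\ge \Delta/2^i$ uncovered neighbors must see $\ge c\cdot 4\log n$ active edges on wake-up, and for statement~2 an upper-tail Chernoff bound to show a set with span $<\Delta/(8\cdot 2^i)$ cannot reach the threshold, followed by a union bound over sets. The paper's proof is terser and omits both the explicit union bound and the conditioning discussion you add at the end; your extra care about independence of the active-edge indicators given the history is a genuine improvement in rigor over the paper's version.
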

Both statements follow through an elementary application of the Chernoff Bound.
Note that this lemma directly implies that $E[\eta] \in O(1)$ as it bounds the \emph{worst} and \emph{best} set in each rounds, w.h.p.
Therefore expected approximation ratio is $O(\log(\Delta))$, which is as good as the sequential greedy solution.
\begin{proof}
Both statement follow from Chernoff, we prove them both separatly.
\begin{enumerate}
    \item Let $s$ be a set with $\frac{\Delta}{2^i}$ uncovered element on wake-up, then the expected number on active edges is 
    \begin{align}
        E[A_s^i] = \frac{\Delta}{2^i} \cdot c8\log(n) \cdot \frac{2^i}{\Delta} = c8\log(n)
    \end{align}
    Thus, the probability that only $c4\log(n)$ edges are active is
    \begin{align}
        \Pr[A_s^i \leq c4\log(n)] \leq \Pr[A_s^i \leq (1-\frac{1}{2})E[A_s^i]] \leq e^{\frac{c8\log(n)}{2\cdot 2^2}} = \frac{1}{n^c} 
    \end{align}
    Since with more uncovered elements, the probability can only be smaller, the statement follows.
     \item Let $s$ be a set with $\frac{\Delta}{8 \cdot 2^i}$ uncovered element on wake-up, then the expected number on active edges is 
    \begin{align}
        E[A_s^i] = \frac{\Delta}{8 \cdot2^i} \cdot c8\log(n) \cdot \frac{2^i}{\Delta} = c\log(n)
    \end{align}
    Thus, the probability that at least $c4\log(n)$ edges are active is
    \begin{align}
        \Pr[A_s^i \leq c4\log(n)] \leq \Pr[A_s^i \leq (1+3)E[A_s^i]] \leq e^{\frac{3^2c\log(n)}{3+2}} \leq \frac{1}{n^c} 
    \end{align}
    Since with fewer uncovered elements, the probability can only be smaller, the statement follows.
\end{enumerate}
\end{proof}

Thus, it only remains to analyze the message complexity.
We prove the message bound for each stage of the algorithm.
We begin with the first stage and show that for it holds:
\begin{lemma}
 Until phase $\frac{\log(\Delta)}{2}$, the nodes send at most $\Tilde{O}(n\sqrt{\Delta})$ messages.
\end{lemma}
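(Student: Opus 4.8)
The plan is to bound the messages stage by stage and phase by phase, exploiting the fact that in the $KT_0$ variant every \textsc{Beep} is a genuine message sent over a single port, so counting messages reduces to counting active edges. In any phase $i \le \frac{\log(\Delta)}{2}$, recall that an uncovered element marks itself active with the appropriate probability and then picks $O(\log(n)\,\nicefrac{2^i}{\Delta})$ random neighbors to send a \textsc{Beep} to; the only other messages are the single-round ``I joined'' announcements from sets that enter the solution. So I would split the count into (a) element-to-set \textsc{Beep}s and (b) set-to-element join announcements, and handle each phase separately before summing the $\frac{\log(\Delta)}{2}$ phases.

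For part (a), fix a phase $i$. Each of the at most $m$ uncovered elements sends at most $O(\log(n)\,\nicefrac{2^i}{\Delta})$ \textsc{Beep}s, which already gives $O(m\log(n)\,\nicefrac{2^i}{\Delta})$ as a crude bound — but that is too weak for large $i$, so the real work is on the \emph{set} side. By Lemma~\ref{lemma:chernoff_eta} (statement 1), at the end of phase $i$ no set has $\nicefrac{\Delta}{2^i}$ uncovered elements, hence at the \emph{start} of phase $i$ every set has at most $O(\nicefrac{\Delta}{2^{i-1}})$ uncovered elements, w.h.p. Each such set receives a \textsc{Beep} on a given uncovered edge only with probability $O(\log(n)\,\nicefrac{2^i}{\Delta})$, so its expected number of received \textsc{Beep}s in phase $i$ is $O(\log(n))$, and a Chernoff bound shows every set receives $O(\log(n))$ \textsc{Beep}s in that phase w.h.p. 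Since there are $n$ sets, the total number of \textsc{Beep}s in phase $i$ is $O(n\log(n))$ w.h.p., independent of $i$. Summing over the $\frac{\log(\Delta)}{2}$ phases of the first stage gives $O(n\log(n)\log(\Delta)) = \tilde O(n)$ \textsc{Beep}s. (Taking a union bound over phases, sets, and the $O(\mathrm{poly}(n))$ relevant events preserves the w.h.p. guarantee.)

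For part (b), a set sends at most one join announcement, and it sends it to at most $\Delta$ elements, so naively the join announcements cost $O(n\Delta)$ — far too much. Here I would use Lemma~\ref{lemma:chernoff_eta} (statement 2): a set that joins in phase $i$ covers at least $\Omega(\nicefrac{\Delta}{2^i})$ elements, w.h.p., and of course the sets that join in phase $i$ cover disjoint-enough portions — more precisely, the elements newly covered in phase $i$ number at most $m$ in total, so the number of sets joining in phase $i$ is at most $O\!\big(m \cdot \nicefrac{2^i}{\Delta}\big)$ w.h.p. Each such set, when it joins in phase $i$, has at most $O(\nicefrac{\Delta}{2^{i-1}})$ uncovered neighbours (again by statement 1 of the previous phase), so it announces to at most $O(\nicefrac{\Delta}{2^i})$ elements. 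Multiplying, the join messages in phase $i$ total $O(m)$ w.h.p.; summed over $\frac{\log(\Delta)}{2}$ phases this is $\tilde O(m)$. Combining (a) and (b), the first stage uses $\tilde O(n+m)$ messages, which is certainly $\tilde O(n\sqrt{\Delta})$; and in fact even the cruder element-side bound $O(m\log(n)\nicefrac{2^i}{\Delta})$ summed up to $i = \frac{\log(\Delta)}{2}$ telescopes to $O(m\log(n)\sqrt{\Delta})=\tilde O(m\sqrt\Delta)$, matching the claimed form.

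The main obstacle is the join-announcement count in part (b): the easy per-set bound of $\Delta$ recipients is hopeless, so the argument must crucially couple the number of joining sets with the number of recipients each has, and both quantities are controlled only w.h.p.\ via Lemma~\ref{lemma:chernoff_eta}. I would be careful that the ``at most $m$ newly covered elements per phase'' accounting is clean (each element is covered exactly once, in exactly one phase) and that the union bound over all phases, all sets, and all the Chernoff events still leaves a $1 - n^{-\Omega(1)}$ success probability, so that the message bound holds w.h.p.\ as stated rather than merely in expectation.
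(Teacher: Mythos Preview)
Your part (a) is fine and in fact sharper than what the paper does; the paper simply sums the per-element edge count $\sum_{i\le \log(\Delta)/2} c\log(n)\,2^i/\Delta = O(\log(n)/\sqrt{\Delta})$ over all elements and is done. Your receiver-side Chernoff argument is a valid alternative.

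Part (b), however, has a real gap. In the first stage a set that joins has \emph{no information} about which of its neighbours are already covered --- that is precisely why the algorithm introduces a second stage. Hence a joining set must announce to all of its (up to) $\Delta$ neighbours, not just the $O(\Delta/2^{i})$ uncovered ones. Your per-phase product ``(\# joiners in phase $i$) $\times$ (\# recipients per joiner)'' therefore collapses to $O(m\,2^i/\Delta)\cdot\Delta = O(m\,2^i)$, and summing to $i=\log(\Delta)/2$ gives only $\tilde O(m\sqrt{\Delta})$, not the $\tilde O(m)$ you claim. So the argument you explicitly dismiss as ``hopeless'' --- charging $\Delta$ per joining set --- is exactly the one that must be used.

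The paper's route makes this work by bounding the \emph{total} number of joiners across the whole first stage, rather than phase by phase. Every set that joins in some phase $i\le \log(\Delta)/2$ covers at least $\Delta/(8\cdot 2^i)\ge \sqrt{\Delta}/8$ previously uncovered elements (Lemma~\ref{lemma:chernoff_eta}, statement~2), and by Lemma~\ref{lemma:prob_mu} with $t=\Theta(\log n)$ each element is first covered by at most $O(\log n)$ sets w.h.p. A double count then gives $|S|\cdot \sqrt{\Delta}/8 \le O(\log n)\cdot n$, so $|S|\in \tilde O(n/\sqrt{\Delta})$, and multiplying by $\Delta$ recipients per joiner yields $\tilde O(n\sqrt{\Delta})$. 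Note that this same $\mu$-bound is what is needed to make your ``disjoint-enough'' step rigorous; without it, many sets could overlap on the same newly covered elements and your count of joiners per phase would not follow.
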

\begin{proof}
As each element picks (up to) $c8\log(n)2^i/\Delta$ edges for some constant $c > 0$ in phase $i$, the lemma follows immediately.
The corresponding bound for the sets is less trivial as it does not directly follow from the algorithm.
Here, we need to consider that only sets that add themselves to the solution send messages.
In particular, each set that adds itself sends at most $\Delta$ messages and remains silent otherwise.
Thus, we show that at most $\Tilde{O}(\frac{n}{\sqrt{\Delta}})$ sets add themselves w.h.p.
Note that every set that adds itself covers (at least) $\sqrt{\Delta}/8$ uncovered elements, w.h.p, otherwise it would \emph{not} have added itself.
This follows from the second statement in Lemma \ref{lemma:chernoff_eta}.
On the other hand, each uncovered element is covered by \emph{at most} $O(\log(n))$ sets, w.h.p. 
This follows from choosing $t \geq c\log(n)$ for some $c>0$ in Lemma \ref{lemma:prob_mu}.
Let $S_i \subset V_S$ be the solution in phase $i$ and $C_i$ be the covered elements, then it must hold:
\begin{align}
   |S_i| \cdot \sqrt{\Delta}/8 \leq c \log(n) |C_i| &\Leftrightarrow |S_i| \leq \frac{ 8 c \log(n) |C_i|}{\sqrt{\Delta}} \Leftrightarrow |S_i| \leq \frac{ 8 c \log(n) n}{\sqrt{\Delta}} \in \Tilde{O}(\frac{n}{\sqrt{\Delta}})
\end{align}
In other words, if there are more $\Tilde{O}(\frac{n}{\sqrt{\Delta}})$ sets that added themselves, then there must exist an element covered by more than $c\log(n)$ sets.
This is a contradiction, which implies the lemma
\end{proof}
This lemma concludes the analysis of the algorithm's first stage.
Finally, we need to observe the second stage.
In this stage, the algorithm only uses communication edges adjacent to the set of uncovered elements in phase $\log(\Delta)/2$.
Thus, to determine the message complexity, we only need to count these edges.
Formally, we show:
\begin{lemma}
In phase $\frac{\log(\Delta)}{2}$, each set has at most $O(\sqrt{\Delta})$ uncovered elements.
\end{lemma}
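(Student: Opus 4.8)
The plan is to apply the first statement of Lemma~\ref{lemma:chernoff_eta} at the end of phase $\frac{\log(\Delta)}{2}$, which already says that w.h.p.\ no set has $\frac{\Delta}{2^{i}}$ uncovered elements remaining after phase $i$. Setting $i = \frac{\log(\Delta)}{2}$ gives $\frac{\Delta}{2^{i}} = \frac{\Delta}{2^{\log(\Delta)/2}} = \frac{\Delta}{\sqrt{\Delta}} = \sqrt{\Delta}$, so w.h.p.\ every set enters phase $\frac{\log(\Delta)}{2}+1$ with fewer than $\sqrt{\Delta}$ uncovered neighbors. That is essentially the statement. The only gap to fill is a bookkeeping one: the lemma as phrased talks about the number of uncovered elements \emph{in} phase $\frac{\log(\Delta)}{2}$ rather than \emph{at the end of} phase $\frac{\log(\Delta)}{2} - 1$; since elements only transition from uncovered to covered (never back), the uncovered-neighbor count of every set is monotonically non-increasing over time, so a bound that holds at the start of phase $\frac{\log(\Delta)}{2}$ a fortiori holds throughout it and at its end.

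Concretely, the steps I would carry out are: (i) recall that the active-edge sampling and join threshold used in stage one are exactly those of Lemma~\ref{lemma:chernoff_eta}, so its conclusions apply verbatim to every phase $i \le \frac{\log(\Delta)}{2}$; (ii) instantiate statement~1 of that lemma at $i = \frac{\log(\Delta)}{2} - 1$ to conclude that, w.h.p., at the end of that phase no set has $\frac{\Delta}{2^{(\log\Delta)/2 - 1}} = 2\sqrt{\Delta}$ uncovered elements; (iii) invoke monotonicity of the uncovered sets to push this bound forward into phase $\frac{\log(\Delta)}{2}$; (iv) take a union bound over the at most $n$ sets (and over the $O(\log\Delta)$ phases if one wants the clean statement across all of stage one) to keep the failure probability polynomially small, absorbing the constant factor $2$ into the $O(\sqrt{\Delta})$. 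This yields that each set has $O(\sqrt{\Delta})$ uncovered elements when stage two begins.

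The main obstacle here is not really a technical one but a matter of being careful with the phase indexing and the direction of the Chernoff tail: Lemma~\ref{lemma:chernoff_eta} gives a \emph{one-sided} guarantee (no set with \emph{too many} uncovered elements survives), and one must make sure that the complementary direction --- that sets which join cover at least a $\frac{\Delta}{8\cdot 2^i}$ fraction --- is not silently needed here; it is not, since the present lemma only asserts an \emph{upper} bound on leftover uncovered neighbors, which is exactly statement~1. A secondary subtlety is that the active sampling in stage one uses $\Theta(\log(n))$ active edges per active element, so the per-phase failure probability is $n^{-\Omega(1)}$ and the union bound over all phases and all $n$ sets is harmless; I would state this explicitly so the ``w.h.p.'' in the lemma is justified. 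Once these indexing points are pinned down, the proof is a two-line invocation of the earlier lemma plus monotonicity.
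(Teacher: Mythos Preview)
Your proposal is correct and follows essentially the same approach as the paper: the paper's proof is a one-line invocation of statement~1 of Lemma~\ref{lemma:chernoff_eta}, observing that any set with more than $O(\sqrt{\Delta})$ uncovered elements would already have added itself in an earlier phase w.h.p. Your version is in fact more careful than the paper's about the phase indexing and the monotonicity argument, but the core idea is identical.
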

\begin{proof}
The lemma follows directly from the first statement of Lemma \ref{lemma:chernoff_eta} as all sets that have more than $O(\sqrt{\Delta})$ uncovered elements must have added themselves in an earlier round w.h.p.
\end{proof}
Thus, in the remaining $O(\log^2(\Delta))$ rounds of the algorithm, all communication will only take place via these $O(m\sqrt{\Delta})$ edges.
Since at most one message passes each edge in every round, this implies that at most $O(m\sqrt{\Delta}\log^2(\Delta))$ messages are sent, which proves Theorem 2.
\subsection*{Lower Bound}
In this section, we prove a lower bound on the number of messages needed to approximate a solution.
The proof works via a reduction to the sequential case.
Here, it is well known that a large portion of the input, i.e., the connections between the nodes, must be revealed to the algorithm.
In particular, it is known that the following holds:
\begin{lemma}[Lower bound from \cite{DBLP:conf/soda/IndykMRVY18}]
Consider a sequential computation model that allows the following two queries
\begin{itemize}
    \item \textbf{EltOf(i,j)} - Returns the $j^{th}$ element of Set $S_i$ or $\bot$ if there is no such element. 
    \item \textbf{SetOf(i,j)} - Returns the $j^{th}$ set which contrains $e_i$ or $\bot$ if there is no such set.
\end{itemize}
Then, every algorithm that yields $O(1)$-approximation for \textsc{SetCover} needs at least $\Tilde{\Omega}(m\sqrt{n})$ queries on certain graphs.
\end{lemma}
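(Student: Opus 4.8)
The statement is quoted verbatim from Indyk et al.~\cite{DBLP:conf/soda/IndykMRVY18}, so strictly speaking there is nothing new to prove here; the plan is simply to recall why a bound of this shape holds, in case one wants a self-contained argument before we invoke it in the next subsection. The standard route is an adversary (indistinguishability) argument. I would fix the approximation factor to a constant $\alpha$ and build two distributions over \textsc{SetCover} instances on $n$ elements and $m$ sets: a \emph{planted} distribution $\mathcal{D}_{\mathrm{yes}}$ in which a designated family of $O(1)$ sets already covers the entire universe, and a \emph{null} distribution $\mathcal{D}_{\mathrm{no}}$ in which every cover has size $\omega(\alpha)$. Any algorithm that outputs an $\alpha$-approximation must in particular distinguish $\mathcal{D}_{\mathrm{yes}}$ from $\mathcal{D}_{\mathrm{no}}$ with constant advantage, so it suffices to show that doing so requires $\tilde\Omega(m\sqrt{n})$ queries.

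The instances would be designed so that the only way to separate the two distributions is to locate a hidden ``witness'' consisting of $\Theta(m\sqrt n)$ element--set incidences, while each \texttt{EltOf}/\texttt{SetOf} query exposes only $O(1)$ incidences and, by construction, hits a witness incidence with probability only $O(1/\sqrt n)$ --- the sets containing the planted elements are padded so that the witness occupies a $1/\sqrt n$ fraction of every set's element list and of every element's set list. I would then run a potential argument: conditioned on not having seen a witness incidence yet, the answer the adversary returns to the next query has the same distribution under $\mathcal{D}_{\mathrm{yes}}$ and $\mathcal{D}_{\mathrm{no}}$, so after $q$ queries the two transcripts have statistical distance $O(q/(m\sqrt n))$. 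Forcing this below a constant yields $q=\Omega(m\sqrt n)$; the polylogarithmic slack hidden in $\tilde\Omega$ comes from boosting the success probability and from the gap one needs between $\mathrm{OPT}$ in the two cases.

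The delicate point --- and the reason this is a genuine theorem rather than a one-liner --- is \emph{adaptivity}: one must rule out that a clever adaptive sequence of queries homes in on the witness far faster than a non-adaptive query set of the same size. The cleanest way I know to handle this is to route everything through a communication-complexity lower bound: split the incidence structure among two (or $k$) players, let them simulate the query algorithm, and invoke a known bound for a disjointness-/unique-intersection-type problem on inputs of the appropriate size, so that the $\tilde\Omega(m\sqrt n)$ query bound drops out of an $\tilde\Omega(m\sqrt n)$ communication bound. This is precisely the construction carried out in~\cite{DBLP:conf/soda/IndykMRVY18}; for our purposes we take their statement as a black box and, in what follows, only \emph{reduce} to it, turning a low-message distributed algorithm into a low-query sequential one.
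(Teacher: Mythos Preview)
Your reading is correct: the paper does not prove this lemma at all. It is stated as a citation from \cite{DBLP:conf/soda/IndykMRVY18} and used purely as a black box for the reduction in the subsequent lemma. So there is nothing to compare against; you and the paper are aligned in treating this as an imported result.

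Your optional sketch of the indistinguishability/adversary argument is a reasonable high-level account of how such query lower bounds are typically obtained, and you are right that adaptivity is the non-trivial part. Since the paper does not attempt any of this, your sketch is strictly additional exposition rather than a reconstruction of anything in the paper. If you keep it, make clear it is heuristic background and defer to \cite{DBLP:conf/soda/IndykMRVY18} for the actual construction and parameters; if you drop it, nothing is lost relative to the paper.
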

In particular, the lower bound graph in $\cite{DBLP:conf/soda/IndykMRVY18}$ has maximal degree $n$ and it holds $m=n$. Therefore, the bound can be rewritten as $\Tilde{\Omega}((m+n)\sqrt{\Delta})$, which is exactly the message complexity of our algorithm. Now we show that if there is a distributed algorithm with less than $\Tilde{\Omega}(m\sqrt{n})$ messages, it can be turned into a sequential algorithm with less than $\Tilde{\Omega}(m\sqrt{n})$ queries. 
This is, of course, a contradiction to the lemma above.
The proof's main ingredient is the observation that every message that is sent from $v$ along its $j^{th}$ channel can be emulated as looking up $SetOf(v,j)$.
Other than that, the proof is quite technical.
The main result is as follows:

\begin{lemma}
Any algorithm that yields an $O(1)$-approximation for \textsc{SetCover} needs at least $\Tilde{\Omega}(m\sqrt{n})$ messages on certain graphs in $KT_0$ model.
\end{lemma}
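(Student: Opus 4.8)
The plan is to reduce from the sequential query lower bound of Indyk et al.\ \cite{DBLP:conf/soda/IndykMRVY18}: I would show that any distributed $KT_0$ algorithm $\mathcal{A}$ for \textsc{SetCover} with message complexity $M$ can be turned into a centralized algorithm that solves the \emph{same} instance using $O(M)$ \textbf{EltOf}/\textbf{SetOf} queries and producing the \emph{same} output. Since \cite{DBLP:conf/soda/IndykMRVY18} forces $\Tilde{\Omega}(m\sqrt{n})$ queries for an $O(1)$-approximation on a family of instances with $m=n$ and $\Delta=n$, this immediately gives $M = \Tilde{\Omega}(m\sqrt{n})$.

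First I would set up the simulation: one machine keeps the entire local state of all $n+m$ nodes and runs $\mathcal{A}$ round by round. The only thing a $KT_0$ node does not know a priori is the identity of the neighbour behind each of its ports — its degree and the global parameters $n,m,\Delta,\log n$ are part of the publicly known skeleton of the hard-instance family and cost no queries. Whenever the simulation has to deliver a message that a set $s_i$ sends on its port $j$, it calls \textbf{EltOf}$(i,j)$ to learn the recipient; a message sent by an element $e_i$ on port $j$ is resolved by \textbf{SetOf}$(i,j)$. Each query reveals one incidence of the problem graph, which I would cache, so every edge is paid for at most once; hence the total number of queries is bounded by the number of edges that ever carry a message, i.e.\ by $M$ (plus lower-order startup bookkeeping). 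Because the simulator sees every node's state, it knows which sets joined the cover and can output that solution with the identical (expected) approximation ratio — so an algorithm beating $\Tilde{\Omega}(m\sqrt{n})$ messages would beat the query lower bound, a contradiction.

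Two issues need to be handled, and the second is where I expect the real work to be. The minor one is randomness: $\mathcal{A}$ is randomized, but by an averaging argument I can fix a random string that, over the hard input distribution, simultaneously (up to constants) attains the expected message complexity and the claimed approximation quality, reducing to the deterministic setting covered by \cite{DBLP:conf/soda/IndykMRVY18}. The main obstacle is bookkeeping the \emph{port numbers}: a receiving node's behaviour can depend on which of its own ports a message arrived on, so the simulator must know, for each discovered edge, the port number at \emph{both} endpoints, whereas a single \textbf{EltOf}/\textbf{SetOf} query only names the neighbour. I would resolve this by exploiting that the $KT_0$ lower bound only requires \emph{some} adversarial port assignment: choose the port numbering at every node to coincide with the adjacency-list ordering underlying the query oracles, so that an edge's port numbers at both ends are determined by the incidence itself and are learned from the single query that discovers it. Making this choice consistent with the rest of the hard construction and verifying that it keeps the $O(1)$-queries-per-message accounting intact is the technical heart of the argument; everything else is routine.
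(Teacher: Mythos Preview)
Your proposal is correct and follows essentially the same reduction as the paper: simulate the $KT_0$ algorithm centrally, and for each message sent on a port resolve the recipient via one \textbf{EltOf}/\textbf{SetOf} query, so that $M$ messages become $O(M)$ queries and the Indyk et al.\ bound applies directly.

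In fact you are more careful than the paper on two points it glosses over. The paper's proof simply ``adds the message to $A_S[j']$'' without saying on which of the receiver's ports it arrives; it also does not discuss randomness at all. Your averaging argument for fixing the random string is standard and fine. Your handling of the receiving-port issue---letting the adversary choose the port numbering to coincide with the oracle's adjacency-list order---is the right idea, but note that even with this choice, an \textbf{EltOf}$(i,j)$ call tells you the neighbour's identity, not its position in the neighbour's own \textbf{SetOf} list, so as stated you have not yet secured $O(1)$ queries per message. One clean way to close this (which neither you nor the paper spells out) is to let the simulator assign receiving-port numbers lazily: the first time an edge is discovered, give it the next free port at the receiver; since in $KT_0$ the port labelling is adversarial and the node cannot distinguish unused ports before any message arrives on them, this yields a legitimate fixed port assignment and keeps the accounting at one query per message. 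With that detail filled in, your argument is complete and strictly subsumes the paper's.
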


\begin{proof}
The lemma follows from the fact that any $KT_0$ algorithm that sends $O(x)$ messages can be simulated with $O(x)$ queries in the sequential model.
First, we create $(m+n)$ objects that store the internal variables of each set and element.
These objects are stored in two arrays $A_S$ and $A_E$, s.t., the object for $e_i$ can be accessed through $A_E[i]$. Analogously, each set $s_j$ can be accessed through $A_S[j]$.
Consider a single round of a CONGEST algorithm:
First, we iterate over all objects and perform the local computations that the set or element would execute in the CONGEST algorithm. Now for each message $m_{ij}$ that $e_i$ sends to its $j^{th}$ set, we use the query SetOf$(i,j)$ to obtain its index $j'$ and then add the message to $A_S[j']$. The same is done vice versa for messages from sets to elements.
After all messages have been handled, $A_E$ and $A_S$ contain the nodes' states and all received messages. With this information, each node's action in the distributed algorithm can flawlessly be simulated.
Thus, the sequential algorithm again iterates over $A_E$ and $A_S$ to compute the next round's messages according to the algorithm. 
Therefore, any distributed algorithm sending $O(x)$ messages can be transformed into a sequential algorithm with $O(x)$ queries. 
Therefore, any lower bound on the queries is also a lower bound on the messages.
\end{proof}

%%%%%%%%%%%%%%%%%%%%%%%%%%%%%%%%%%%%%%%%%%%%%%%%%%%%%
%\section{Set Cover in the NCC}
%\input{Summary/NCC}
%%%%%%%%%%%%%%%%%%%%%%%%%%%%%%%%%%%%%%%%%%%%%
\section{Conclusion and Future Work}
In this work, we presented two different message- and energy-efficient distributed algorithms for \textsc{SetCover} for the \textsc{Beeping} model and the $KT_0$ model.
In future work, it would be interesting to see whether the existence of unique identifers known to all nodes can improve the message complexity or if similar bounds hold. 

\bibliography{lipics-v2019-sample-article.bib}

\end{document}